\newtheorem{definition}{Definition}
\newtheorem{theorem}[definition]{Theorem}
\newtheorem{lemma}[definition]{Lemma}
\newtheorem{corollary}[definition]{Corollary}
\newtheorem{remark}[definition]{Remark}
\newtheorem{proposition}[definition]{Proposition}
\newtheorem{example}{Example}
\renewcommand{\tilde}[1]{\widetilde{#1}}
\newcommand{\ff}{\mathbb{F}_q}
\title{Relative generalized Hamming weights of one-point
  algebraic geometric codes\thanks{%
The result in this paper is in part submitted for
possible presentation in IEEE Information Theory Workshop
(ITW 2014) \cite{itw2014}.}}
\author[1]{Olav Geil\thanks{olav@math.aau.dk}}
\author[1]{Stefano Martin\thanks{stefano@math.aau.dk}}
\author[2]{Ryutaroh Matsumoto\thanks{ryutaroh@rmatsumoto.org}}
\author[1]{Diego Ruano\thanks{diego@math.aau.dk}}
\author[3]{Yuan Luo\thanks{yuanluo@sjtu.edu.cn}}
\affil[1]{Department of Mathematical Sciences, Aalborg University, Denmark}
\affil[2]{Department of Communications and Computer Engineering,
    Tokyo Institute of Technology, Japan}
\affil[3]{Computer Science and Engineering Department, Shanghai Jiao
  Tong University, China}
\begin{document}
\maketitle

\begin{abstract}
Security of
linear ramp secret sharing schemes can be characterized by the relative
generalized Hamming weights of the involved
codes~\cite{luo,kurihara}. In this paper we elaborate on the
implication of these parameters and we devise a method to estimate
their value for general one-point algebraic geometric codes. As it is
demonstrated, for Hermitian codes our bound is often
tight. Furthermore, for these codes the relative generalized Hamming
weights are often much larger than the corresponding generalized
Hamming weights.\\

\noindent {\bf{Keywords:}} linear code, Feng-Rao bound, Hermitian code, one-point algebraic geometric code, relative dimension/length profile, relative generalized Hamming weight, secret sharing, wiretap channel of type II.
\end{abstract}

\section{Introduction}
A secret sharing scheme is a cryptographic method to encode a secret
into multiple shares later distributed to participants, so that only
specified sets of participants can reconstruct the secret. The first
secret sharing scheme was proposed by Shamir~\cite{shamir1979share}. It was a
perfect scheme, in which a set of participants unable to reconstruct
the secret has absolutely no information on the secret. Later,
non-perfect secret sharing schemes were proposed~\cite{Blakley,Yamamoto} in which
there are sets of participants that have non-zero amount of
information about the secret but cannot reconstruct it. The term ramp
secret sharing scheme is sometimes used for the latter mentioned
type of schemes, sometimes for the union of the two types. In this
paper we will apply the most general definition, but concentrate our
investigation on non-perfect secret sharing schemes. Secret sharing
has been used, for example, to store confidential information to
multiple locations geographically apart. By using  secret sharing
schemes in such a scenario, the likelihoods of both data loss and data
theft are decreased. As far as we know, in many applications both
perfect and non-perfect ramp secret sharing schemes can be used. In
the perfect scheme, the size of a share must be at least that of the
secret~\cite{capocelli1993size}. On the other hand, ramp secret sharing schemes allow
shares to be smaller than the secret, which is what we concentrate on
in this paper. Such schemes are particularly useful for storing bulk
data~\cite{csirmaz2009ramp}\\ 

A linear ramp secret sharing scheme
can be described
as a coset
construction $C_1/C_2$ where $C_2 \subsetneq C_1$ are linear codes \cite{Chen}
. It
was shown in~\cite{Bains,kurihara,subramanian2009mds} that the corresponding relative
dimension/length profile (RDLP)
expresses the worst case information
leakage to unauthorized sets in such a system.
RDLP was proposed by Luo et~al.\ \cite{luo}.
They \cite{luo} also proposed the relative generalized
Hamming weight (RGHW) and its equivalence to RDLP, 
similar to the one demonstrated by Forney \cite{forney94}
between the dimension/length profile and the generalized
Hamming weight.
The $m$-th RGHW expresses the smallest size of
unauthorized sets that can obtain $m$ $q$-bits \cite{Bains,kurihara},
where $q$ is the size of the alphabet of $C_2 \subsetneq C_1$.
In order to investigate the potential of
linear codes to construct useful ramp secret sharing schemes,
it is indispensable to study the RGHW and the RDLP.
However, not much research has been done so far,
partly because the connection between the secret sharing
and RGHW/RDLP was only recently reported.
In particular, few classes of linear codes have been examined for
their RGHW/RDLP.
In this paper we study RGHW of general linear codes
by the Feng-Rao approach \cite{agismm}, and explore
its consequences for one-point algebraic geometry (AG) codes \cite{tsfasmanvladut,handbook} and
in particular the Hermitian codes \cite{tiersma,stichtenothhermitian,yanghermitian}.\\

The present paper starts with
a discussion of known results regarding linear ramp secret sharing
schemes and it continues with demonstrating that the RGHWs can also be
used to express the best case information leakage. The main result of
the paper is a method to estimate RGHW of one-point algebraic
geometric codes.
This is done by carefully applying the Feng-Rao
bounds
\cite{agismm}
for primary \cite{AG}
as well as dual \cite{FR1,FR2,pellikaan93efficient,handbook,MM,heijnenpellikaan}
codes. From this we derive a
relatively simple bound which uses information on the corresponding
Weierstrass semigroup \cite{hurwitz,centina}. As shall be demonstrated for Hermitian
codes the new bound is often sharp. Moreover, for the same codes the RGHW are often much larger than the corresponding
generalized Hamming weights (GHW) \cite{wei}
which means that studies of RGHW cannot be substituted by
those of GHW.

The paper is organized as follows. Section~\ref{sec2} describes the
use of RGHW in connection with linear ramp secret sharing schemes, and
in connection with communication over the wiretap channel of type
II. In Section~\ref{sec3} we apply the theory to the special case of
MDS codes. In Section~\ref{sec4} we show -- at the level of general
linear codes -- how to employ the Feng-Rao bounds to estimate RGHW. This
method is then applied to one-point algebraic geometric codes in
Section~\ref{sec5}. We investigate Hermitian codes in
Section~\ref{sec6}, and treat the corresponding ramp secret sharing
schemes in Section~\ref{sec7}.

\section{Ramp secret sharing schemes and wiretap channels of type II}\label{sec2}
Ramp secret sharing schemes were introduced in
\cite{Blakley,Yamamoto}. Let $\ff$ be the finite field with $q$
elements. A ramp secret sharing scheme	with $t$-privacy and
$r$-reconstruction is an algorithm that, given an input $\vec{s} \in
\ff^\ell$, outputs a vector $\vec{x} \in \ff^n$, the vector of shares
that we want to share among $n$ players, such that, given a collection
of shares $\{x_i \mid i \in \mathcal{I}\}$ where ${\mathcal{I}}
\subseteq \{1, \ldots , n\}$,  
one has no information about $\vec{s}$ if
$\# \mathcal{I} \le t$ 
 and 
one can recover $\vec{s}$ if
$\# \mathcal{I} \ge r$ \cite{Chen}. We shall always assume that $t$ is
largest possible and that 
$r$ is smallest possible such that the above hold. We say that one has a $t$-threshold secret sharing scheme if $t = r+1$. 
 
We consider the secret sharing schemes introduced in \cite[Section 4.2]{Chen}, which was the first general construction of 
 ramp secret sharing schemes using arbitrary linear codes:  Let 	
 $C_2 \subsetneq C_1 \subseteq \ff^n$ be two linear codes. Set $k_2 =
 \dim (C_2)$ and $k_1 = \dim (C_1)$ and let $L \subsetneq \ff^n$ be
 such that $C_1 = L  \oplus C_2$ (direct sum). That is, $L \cap
 C_2=\{\vec{0}\}$ and the union of a basis for $L$ and a basis for
 $C_2$ constitutes a basis for $C_1$. We denote by $\ell = \dim (L) = \dim (C_1 / C_2) =k_1 - k_2$.

We consider a secret $\vec{s} \in \ff^\ell$; note that $\ell >0$ since $C_1
\neq C_2$. We fix a vector space isomorphism $\psi : \ff^\ell \to L$ which maps
the secret $\vec{s} \in \ff^\ell$ to $L$, and choose $\vec{c}_2 \in C_2$ randomly
(uniformly distributed). Finally, consider $\vec{x} = \psi (\vec{s}) + \vec{c}_2 \in
C_1$. The $n$ shares consist of the $n$ coordinates of $\vec{x}$; this
scheme is clearly $\ff$-linear \cite{Chen}. One may also consider that
the secret $\vec{s}$ is represented by the coset $\psi (\vec{s}) + C_2$ in $C_1 /
C_2$. Note that there are $q^\ell$ different cosets in $C_1 / C_2$ and
there are $q^{k_2}$ possible representatives for every coset, i.e. for
generating the shares of a secret $\vec{s}$.  The schemes  in
\cite{MR2588125,Massey} form a particular case of the above scheme with $\ell = 1$.
\begin{remark}
All linear ramp secret sharing schemes with shares in ${\mathbb{F}}_q$ are of the above type.
For
constructions that use puncturing~\cite{Massey}, \cite[Sec.\ 4.1]{Chen} we can take $C_1,
C_2$ to be the punctured codes.
\end{remark}
Let $\mathcal{I} \subseteq \mathcal{J}  = \{ 1, \ldots, n\}$. We
consider that an unauthorized set of participants obtains the shares
$\{ x_i \mid i \in \mathcal{I}\}$. We  represent the shares by a random
variable $\vec{X}$, and the shares obtained by an unauthorized set of
participants by $ f_\mathcal{I}(\vec{x})=(x_i \mid i \in \mathcal{I})$
where $f_\mathcal{I} : \ff^n \to \ff^{\# \mathcal{I}}$. The amount of
information in $q$-bits that the unauthorized set obtains is measured by $I ( \vec{S} ; f_\mathcal{I} (\vec{X}))$,  
the mutual information, where $\vec{S}$ is the random variable that
represents the secrets, and $f_\mathcal{I}(\vec{X})$ is the  random
variable that represents the shares that an unauthorized set may obtain. We assume that both  $\vec{S}$ and $\vec{X}$ 
 are uniformly distributed. In particular we have $t$-privacy and $r$-reconstruction if $t$ is largest possible and $r$ is smallest
 possible such that 
$I ( \vec{S} ; f_\mathcal{I} (\vec{X}) )= 0$ for all $\#\mathcal{I}
 \le t$
and $I ( \vec{S} ; f_\mathcal{I} (\vec{X}) )= \ell$
 for all $\# \mathcal{I} \ge r$. A (non sharp) bound for $r$ and $t$
 was given in \cite{Chen}: $r < n - d(C_1)$ and $t > d(C_2^\perp)$ 
where $d(C_i)$ denotes the minimum distance of $C_i$, for $i=1,2$. The exact values  can be derived from \cite[Proof of Theorem 4]{kurihara} as
\begin{eqnarray}
I ( \vec{S} ; f_\mathcal{I} (\vec{X})) &=& \ell - \dim ( (V_{\overline{\mathcal{I}}} \cap C_1)/(V_{\overline{\mathcal{I}}} \cap C_2)),\label{eq:MI0}\\
 &=&  \dim ( (C_2^\perp \cap V_{\mathcal{I}} )/( C_1^\perp \cap V_{\mathcal{I}})),\label{eq:MI}
\end{eqnarray}where  $\overline{\mathcal{I}} = \mathcal{J} \setminus
I$ and $V_\mathcal{I} = \{ \vec{x} \in \ff^n \mid  x_i =0 {\mbox{ for
    all }} i \notin \mathcal{I} \}$.

For the convenience of the reader we include the computation of the
previous mutual information: since the
variables $\vec{S}$ and $\vec{X}$ are uniformly distributed one has
that $H_q (f_\mathcal{I}(\vec{X})) = \log_q \# f_\mathcal{I}(C_1)  =
\dim (f_\mathcal{I}(C_1)) = k_1 - \dim (\ker (f_\mathcal{I}) \cap
C_1)$, and  $H_q (f_\mathcal{I}(\vec{X}) | S) = \log_q \# f_\mathcal{I}( C_2 )  = \dim (f_\mathcal{I}(C_2)) = k_2 - \dim (\ker
(f_\mathcal{I}) \cap C_2)$. Here, $H_q$ is the entropy function to
base $q$. Therefore $I ( \vec{S} ; f_\mathcal{I} (\vec{X})) =
k_1 - k_2 - \big(\dim (\ker (f_\mathcal{I}) \cap C_1)- \dim (\ker (f_\mathcal{I})
\cap C_2)\big)$ and we obtain equation (\ref{eq:MI0}). Equation
(\ref{eq:MI}) follows from (\ref{eq:MI0}) and an extension of
Forney's second duality lemma \cite[Lemma 25]{Matsumoto2}: Let $V
\subseteq \ff^n$, then 
\begin{equation*}\label{eq:forney}\dim ( (C_2^\perp \cap V^\perp)/(C_1^\perp \cap V^\perp) ) = \dim (C_1 / C_2) - \dim ((C_1 \cap V)/(C_2 \cap V)).\end{equation*}

In order to characterize the security of secret sharing schemes,  one considers the
$j$th relative dimension/length profile (RDLP) of two codes $C_2
\subsetneq C_1$ with $j \in \{1, \ldots, n\}$ \cite{luo}: $$K_j ( C_1,
C_2) = \max_{
{\mathcal{I}} \subseteq {\mathcal{J}}, 
\# \mathcal{I} = j} \dim ( (C_1 \cap V_{\mathcal{I}} )/( C_2 \cap V_{\mathcal{I}})),$$and the $m$th relative generalized Hamming weight  (RGHW) with $m \in \{1, \ldots, \ell\}$  \cite{luo}:
\begin{equation}
M_m ( C_1, C_2) = \min_{\mathcal{I} \subseteq \mathcal{J}} \{
\# \mathcal{I} \mid \dim ( (C_1  \cap V_{\mathcal{I}} )/( C_2  \cap
V_{\mathcal{I}})) = m \}. \label{eqluo}
\end{equation}

In this way the worst amount of information leakage  of $\vec{s}$
from $j$ shares is precisely characterized by the $j$th relative
dimension/length profile of $C_2^\perp$ and $C_1^\perp$ \cite[Theorem
  4]{kurihara}: 
$$\max_{ {\mathcal{I}} \subseteq {\mathcal{J}}, \# \mathcal{I} = j } I ( \vec{S} ;
f_\mathcal{I} (\vec{X})) = \max_{ {\mathcal{I}} \subseteq
  {\mathcal{J}}, \# \mathcal{I} = j } \dim
( (C_2^\perp \cap V_{\mathcal{I}} )/( C_1^\perp \cap V_{\mathcal{I}})) =
K_j ( C_2^\perp, C_1^\perp).$$
The smallest possible number of shares
for which an unauthorized set of participants can determine $m$ $q$-bits of information is
\begin{eqnarray*}\min_{\mathcal{I} \subseteq \mathcal{J} }
  \{\# \mathcal{I}  \mid  I ( \vec{S} ; f_\mathcal{I} (\vec{X})) =m \} & =
  &  \min_{\mathcal{I} \subseteq \mathcal{J} } \{ \# \mathcal{I} \mid  \dim
  ( (C_2^\perp \cap V_{\mathcal{I}} )/( C_1^\perp \cap V_{\mathcal{I}}))
  = m \}\\ & = & M_m ( C_2^\perp, C_1^\perp).\end{eqnarray*}
In particular $t=M_1 ( C_2^\perp, C_1^\perp) -1$ \cite[Theorem
  9]{kurihara}. (See also~\cite[Th.\ 6.7]{Bains} and for the special case of $\ell=1$ \cite[Cor.\ 1.7]{MR2588125}). We now generalize the notion of $t$-privacy and $r$-reconstruction.
\begin{definition}\label{def:privreco}
We say that a ramp secret sharing scheme has
$(t_1,\ldots,t_\ell)$-privacy and $(r_1, \ldots ,
r_\ell)$-reconstruction if $t_1,\ldots , t_{\ell}$ are chosen largest
possible and $r_1, \ldots
,r_{\ell}$ are chosen smallest possible such that:
\begin{itemize}
\item an adversary cannot obtain $m$ $q$-bits of information about $\vec{s}$ with any
  $t_m$ shares, 
\item it is possible to recover $m$ $q$-bits of information about $\vec{s}$ with any
  collection of $r_m$ shares. 
\end{itemize}
In particular, one has $t=t_1$ and $r = r_\ell$.
\end{definition}

By our previous discussion one has that $t_m = M_{m} (C_2^\perp,
C_1^\perp) -1$ since $M_{m} (C_2^\perp, C_1^\perp)$ is the smallest size
of a set of shares that can determine $m$ $q$-bits of information about $\vec{s}$ \cite[Theorem 4]{kurihara}. We will show that $(r_1, \ldots, r_\ell)$ can be characterized in
terms of the RGHWs as well. Let $r'_m$ be the  largest size of a set of shares that
cannot determine $m$ $q$-bits of information about $\vec{s}$, i.e.
\begin{equation}\label{eq:rp}
r'_m =  \max_{\mathcal{I} \subseteq \mathcal{J} } \{\#\mathcal{I}  \mid  I ( \vec{S}; f (\vec{X})) < m  \}.\end{equation}
 
This value is closely related to $r_m$ since any strictly larger set
of shares will determine $m$ $q$-bits of information about $\vec{s}$ and thus
\begin{eqnarray}
r_m & = & r'_{m} + 1\notag\\
  & = & \max_{\mathcal{I} \subseteq \mathcal{J} } \{\# \mathcal{I}  \mid  I ( \vec{S}; f_\mathcal{I}  (\vec{X})) < m   \} +1\notag\\
  & = & \max_{\mathcal{I} \subseteq \mathcal{J} } \{\# \mathcal{I}  \mid  I ( \vec{S}; f_\mathcal{I}  (\vec{X})) =m - 1 \} +1\notag\\
& = & \max_{\mathcal{I} \subseteq \mathcal{J} } \{ \# \mathcal{I} \mid   \dim ((C_1 \cap V_{\overline{\mathcal{I}}})/(C_2 \cap V_{\overline{\mathcal{I}}})) = \ell - m + 1 \} +1\textrm{,~by~(\ref{eq:MI0})}\notag\\
& =  & n - \min_{\overline{\mathcal{I}} \subseteq \mathcal{J} } \{ \#\overline{\mathcal{I}} \mid  \dim ((C_1 \cap V_{\overline{\mathcal{I}}})/(C_2 \cap V_{\overline{\mathcal{I}}})) =   \ell - m + 1 \} +1\notag\\
& = & n - M_{\ell-m +1}(C_1,C_2)+1. \label{eq:rm}
\end{eqnarray}
In particular one has that $r = r_\ell = n - M_1 ( C_1, C_2)
+1$  \cite[Theorem 9]{kurihara} (see
also~\cite[Cor.\ 1.7]{MR2588125} for the special case $\ell=1$).	We note that $r'_m$ corresponds to the $(m-1)$th  conjugate relative length/dimension profile in \cite{Luo2}.

\begin{theorem}\label{theconnection}
Let $C_1/C_2$, where $\dim ( C_1)-\dim (C_2)=\ell$, be a linear ramp secret
sharing scheme with $(t_1, \ldots , t_\ell)$-privacy and $(r_1, \ldots
, r_\ell)$-reconstruction. Then for $m=1, \ldots , \ell$ we have
$t_m=M_m(C_2^\perp,C_1^\perp)-1$ and $r_m=n-M_{\ell -m+1}(C_1,C_2)+1$.
\end{theorem}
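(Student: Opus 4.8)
The plan is to assemble the two identities from the material already developed in Section~\ref{sec2}, so that the proof reduces to checking that the thresholds of Definition~\ref{def:privreco}, which are phrased with \emph{inequalities} on the leaked information, coincide with the quantities $M_m$, which are defined in~(\ref{eqluo}) with an \emph{equality} on a dimension of a quotient. The bridge between the two is one elementary monotonicity fact: if $\mathcal{I}\subseteq\mathcal{I}'$ with $\#\mathcal{I}'=\#\mathcal{I}+1$, then
$\dim((C_2^\perp\cap V_{\mathcal{I}'})/(C_1^\perp\cap V_{\mathcal{I}'}))$ exceeds $\dim((C_2^\perp\cap V_{\mathcal{I}})/(C_1^\perp\cap V_{\mathcal{I}}))$ by $0$ or $1$, and similarly for the quotients involving $C_1,C_2$ themselves; this is immediate since $\dim(V_{\mathcal{I}'}/V_{\mathcal{I}})=1$.

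First I would treat the privacy side. By Definition~\ref{def:privreco}, $t_m+1$ is the smallest cardinality of a set $\mathcal{I}\subseteq\mathcal{J}$ with $I(\vec{S};f_{\mathcal{I}}(\vec{X}))\ge m$, and by~(\ref{eq:MI}) this mutual information equals $\dim((C_2^\perp\cap V_{\mathcal{I}})/(C_1^\perp\cap V_{\mathcal{I}}))$. Starting from $\mathcal{I}=\emptyset$, where this dimension is $0$, and enlarging one coordinate at a time up to $\mathcal{I}=\mathcal{J}$, where it equals $\ell\ge m$, the monotonicity fact shows every value in $\{0,1,\dots,\ell\}$ is attained along the way; hence a set of minimal size achieving value $\ge m$ in fact achieves value exactly $m$. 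Therefore $t_m+1=M_m(C_2^\perp,C_1^\perp)$, i.e.\ $t_m=M_m(C_2^\perp,C_1^\perp)-1$, as already noted after~(\ref{eq:MI}).

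For the reconstruction side I would simply record the chain of equalities~(\ref{eq:rm}): writing $r_m=r'_m+1$ with $r'_m$ as in~(\ref{eq:rp}), the same monotonicity argument (applied now to $\dim((C_1\cap V_{\overline{\mathcal{I}}})/(C_2\cap V_{\overline{\mathcal{I}}}))$ and reading~(\ref{eq:MI0}) backwards) shows that a set of largest size with $I(\vec{S};f_{\mathcal{I}}(\vec{X}))<m$ necessarily satisfies $I(\vec{S};f_{\mathcal{I}}(\vec{X}))=m-1$; such a set exists since $\mathcal{I}=\emptyset$ gives value $0<m$ and it is not all of $\mathcal{J}$ since $I(\vec{S};f_{\mathcal{J}}(\vec{X}))=\ell\ge m$. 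Passing to complements $\overline{\mathcal{I}}=\mathcal{J}\setminus\mathcal{I}$ turns the maximum over $\#\mathcal{I}$ with $\dim((C_1\cap V_{\overline{\mathcal{I}}})/(C_2\cap V_{\overline{\mathcal{I}}}))=\ell-m+1$ into $n$ minus the minimum over $\#\overline{\mathcal{I}}$ of the same quantity, which by~(\ref{eqluo}) equals $M_{\ell-m+1}(C_1,C_2)$. Hence $r_m=n-M_{\ell-m+1}(C_1,C_2)+1$. The only genuine point to be careful about — and the place where the two definitions could a priori fail to match — is precisely the attainability of all intermediate information values; I expect this to be the main, though routine, obstacle, and it is settled entirely by the one-dimensional-increment observation; everything else is bookkeeping already carried out above.
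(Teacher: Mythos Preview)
Your proposal is correct and follows essentially the same route as the paper: the paper derives both identities in the discussion immediately preceding Theorem~\ref{theconnection} (the $t_m$ formula by citing \cite{kurihara} and the $r_m$ formula via the chain~(\ref{eq:rm})), and you reproduce this derivation while making explicit the one-step monotonicity argument that justifies passing from ``$<m$'' to ``$=m-1$'' (a step the paper simply writes as an equality). There is no substantive difference in strategy.
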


We shall relate the above concept
of $(t_1, \ldots , t_\ell)$-privacy and $(r_1, \ldots
, r_\ell)$-reconstruction
to the literature: let $D_1 \subsetneq D_2 \subseteq \mathbb{F}_q^n$
be vector spaces of codimension $\ell$ and define  for $1 \leq m \leq \ell$,
$$
A_m(D_1,D_2)  = \{ {\mathcal{I}} \subseteq {\mathcal{J}} \mid m = \dim(D_1 \cap
V_{\mathcal{I}})/(D_2 \cap V_{\mathcal{I}})  \}.
$$ 
Since $I ( \vec{S} ; f_\mathcal{I} (\vec{X}))  = \dim ( (C_2^\perp
\cap V_{\mathcal{I}} )/( C_1^\perp \cap V_{\mathcal{I}}))$ we have
that, for $D_1 = C_2^\perp$ and $D_2 = C_1^\perp$, $A_m(D_1,D_2)$ is the
collection of shares that give $m$ $q$-bits of information about
$\vec{S}$. In addition, $A_{\ell}(D_1,D_2)$ is the access structure in the sense of
\cite{Ito}, and $A_m(D_1,D_2)$ is equivalent to $A_m$ in~\cite[Eq.\ (3.1)]{Iwamoto}.

In particular we are interested in the largest and smallest element
of 	such a collection of shares 
$$
\begin{array}{c}
A_m^{\mathrm{min}}(D_1,D_2) = \{ {\mathcal{I}} \in A_m(D_1,D_2) \mid \nexists {\mathcal{K}} \in A_m(D_1,D_2) \textrm{ s.t. } {\mathcal{K}} \subsetneq {\mathcal{I}} \}\\
A_m^{\mathrm{max}}(D_1,D_2) = \{ {\mathcal{I}} \in A_m(D_1,D_2) \mid \nexists {\mathcal{K}} \in A_m(D_1,D_2) \textrm{ s.t. } {\mathcal{K}} \supsetneq {\mathcal{I}} \}
\end{array}
$$
and, as we are interested in its size, we define
$$
\begin{array}{c}
A_m^d(D_1,D_2) = \{ {\mathcal{I}} \in A_m(D_1,D_2) \mid d=\# {\mathcal{I}} \}\\
A_m^{\mathrm{min},d}(D_1,D_2) = \{ {\mathcal{I}} \in A_m^{\mathrm{min}}(D_1,D_2)
\mid d = \# {\mathcal{I}} \}\\
A_m^{\mathrm{max},d}(D_1,D_2) =  \{ {\mathcal{I}} \in
A_m^{\mathrm{max}}(D_1,D_2) \mid d = \# {\mathcal{I}} \}.
\end{array}
$$

Moreover, we are interested in the smallest and
the largest size of a
collection of shares that reveal $m$ $q$-bits of information: 
the first one being the smallest $d \in \{ 1, \ldots , n\}$ such that $ A_m^{\mathrm{min},d}(D_1,D_2)$ is non-empty and it is equal to $M_m (D_1,D_2) = t_m +1$. Analogously, the largest size of a collection of shares that reveals $m$ $q$-bits of information is the largest $d \in \{1, \ldots , n\}$ such that $A_m^{\mathrm{max},d}(D_1,D_2)$ is non-empty and it is equal to $n - M_{\ell-m+1} (C_1,C_2) +1 = r_m$.

Ramp secret sharing schemes with $\ell >1$ are relevant in the situation
where the set of possible secrets is large but one wants to keep the
size of each share small. A further motivation for considering $\ell >1$
is the analogy to the wiretap channels of type II
\cite{wyner,ozarow}. Recall that this model involves a main channel from
Alice to Bob which is assumed to be error and erasure free, and a
secondary channel from Alice to the eavesdropper Eve which is a
$q$-ary erasure channel. Consider the slightly more general situation
where also the main channel is a $q$-ary erasure channel
\cite{subramanian2009mds}. Assuming that the probability of erasure is much smaller on
the main channel than on the secondary channel we see that to achieve
reliable and secure communication we should use long codes $C_2
\subsetneq C_1$. To retain a positive information rate on the main
channel we therefore need $\ell >1$. The exact values of the mutual
information on the main and the secondary channel could be calculated from
$A_m(D_1, D_2)$, $m=1, \ldots , \ell$ and the  erasure probabilities of
the two channels; but it seems a difficult task to determine
$A_m(D_1,D_2)$ even for simple codes. Finding $M_m (D_1,D_2) = t_m +1$ and
$n - M_{\ell-m+1} (C_1,C_2) +1 = r_m$,
however, 
would be a first step in this direction. As we shall see in the
following, for many codes we can easily estimate these last mentioned parameters.\\

In the remaining part of this paper we shall concentrate on
methods to estimate RGHW. We shall need the following definition which
by~\cite{luoetal} is equivalent to~(\ref{eqluo}) (see
also~\cite[Def.\ 6.2]{Bains}).
\begin{definition}\label{defrelham}
Let $C_2 \subsetneq C_1 $ be linear codes over ${\mathbb{F}}_q$. For $m
=1,\ldots , \dim ( C_1)-\dim (C_2)$ the $m$th relative
generalized Hamming weight is defined as
\begin{eqnarray}
&&M_m(C_1,C_2)\nonumber \\
& &= \min \{ \# {\mbox{Supp}}\, D \mid D {\mbox{ \ is a
    subspace of \ }} C_1, \dim (D)=m, D \cap
C_2 =\{ \vec{0} \} \}.\nonumber
\end{eqnarray} 
\end{definition}

From this definition the connection between the RGHW and
the 
generalized Hamming weight (GHW) becomes clear -- the latter being
$d_m(C_1)=M_m(C_1,C_2)$ with $C_2=\{\vec{0}\}$. Before embarking with more general classes of codes in the
next section we discuss the parameters $t_m,r_m$ in the case of
MDS codes.

\section{Ramp schemes based on MDS codes}\label{sec3}
Let $C$ be an MDS code of dimension $k$. Then $C^\perp$ is also MDS
and consequently 
\begin{eqnarray}
d_m(C)=n-k+m,&&m=1, \ldots , k \label{eqhyp1} \\
d_m(C^\perp)=k+m,&&m=1, \ldots, n-k \label{eqhyp2}
\end{eqnarray}
which means that all generalized Hamming weights attain the Singleton bound. 
Consider two MDS codes $C_2 \subsetneq C_1$ with $\dim (C_1)=k_1$
and $\dim (C_2) = k_2$. By definition, $M_m(C_1,C_2)\geq d_m(C_1)$,
$m=1, \ldots ,\ell= k_1-k_2$. However, the
Singleton bound for RGHW is identical
to the Singleton bound for GHW~\cite[Sec.\ IV]{luo} and therefore 
$M_m(C_1,C_2)=d_m(C_1)$ and $M_m(C_2^\perp,C_1^\perp)=d_m(C^\perp_2)$
\cite{subramanian2009mds}. Based on~(\ref{eqhyp1}) and
(\ref{eqhyp2}) one can show that
\begin{eqnarray}
M_m(C_2^\perp,C_1^\perp)=n-M_{\ell-m+1}(C_1, C_2)+1, \label{eqhyp3}
\end{eqnarray}
and from Theorem~\ref{theconnection} it now follows that if we base a ramp scheme on two MDS codes 
then 
the size of a group uniquely
determines how much information it can reveal:
$$t_m=r_m-1, \, \, \, t_{m+1}=t_m+1, \, \, \, t_1=k_2, \, \, \, r_\ell=k_1.$$
When the number of
participants is larger than two times the field size minus 1 then
by~\cite[Cor.\ 7.4.4]{huffman2003fundamentals} 
$C_1$ and $C_2$ cannot be MDS -- unless $k_1=n-1$ and $k_2=1$ -- and consequently we
can no longer assume~(\ref{eqhyp3}). What is obviously needed is a method to
estimate the left and the right side of~(\ref{eqhyp3}) for codes of any length. As shall be demonstrated in the
following the Feng-Rao method makes this possible.
\section{The Feng-Rao bounds for RGHW}\label{sec4}

The Feng-Rao bounds come in two versions: One for primary codes~\cite{AG,MR2831617,agismm}
and one for dual codes~\cite{FR24,FR1,FR2,pellikaan93efficient,handbook,MM}. The most general formulations deal
with arbitrary linear codes, whereas more specialized formulations -- such
as the order bounds -- require that the code construction is supported
by certain types of algebraic structures. The
bounds have been applied to the minimum distance, the generalized Hamming weights -- and for the case of dual
codes of co-dimension $1$ -- also the relative minimum distance~\cite{MR2588125}. It is not difficult to extend the method for estimating GHW to a method
for estimating RGHW. In the following we give the details for primary
codes in the language of general linear codes. The details for dual
codes are similar, hence for these codes we shall give a more
brief description.\\

\noindent We start by introducing some terminology that shall be used
throughout the section. Let ${\mathcal{B}}=\{\vec{b}_1, \ldots ,
\vec{b}_n\}$ be a fixed basis for ${\mathbb{F}}_q^n$ as a vector space over
${\mathbb{F}}_q$ and write ${\mathcal{J}}=\{1, \ldots , n\}$. 
\begin{definition}
The function $\bar{\rho} : {\mathbb{F}}_q^n \rightarrow \mathcal{J}
\cup \{ 0 \}$ is given as follows. For non-zero $\vec{c}$ we have
$\bar{\rho}(\vec{c})=i$ where $i$ is the unique integer such that
$$\vec{c} \in {\mbox{Span}} \{ \vec{b}_1, \ldots , \vec{b}_i\}
\backslash {\mbox{Span}} \{ \vec{b}_1, \ldots ,
\vec{b}_{i-1}\}.$$ Here we used the convention that ${\mbox{Span}} \, 
\emptyset = \{ \vec{0}\}$. Finally, $\bar{\rho}(\vec{0})=0$.
\end{definition}
\noindent The component wise product of two vectors in
${\mathbb{F}}_q^n$ plays a fundamental role in our exposition. This
product is given by $$(\alpha_1, \ldots , \alpha_n) \ast (\beta_1, \ldots , \beta_n)=(\alpha_1\beta_1,
\ldots , \alpha_n\beta_n).$$ 
\begin{definition}
An ordered pair $(i,j) \in {\mathcal{J}} \times {\mathcal{J}}$ is said to be
one-way well-behaving (OWB) if $\bar{\rho}(\vec{b}_{i^\prime} \ast
\vec{b}_j)< \bar{\rho}(\vec{b}_{i} \ast \vec{b}_j)$ holds true for all
$i^\prime \in {\mathcal{J}}$ with 
$i^\prime < i$.  
\end{definition}
\begin{definition}\label{defdif}
For $i \in {\mathcal{J}}$  define 
$$\Lambda_i=\{ l \in {\mathcal{J}}
\mid \exists \, j \in {\mathcal{J}} {\mbox{ such that }} (i,j) {\mbox{
    is OWB and }} \bar{\rho}(\vec{b}_i \ast \vec{b}_j)=l \}.$$ 
\end{definition}
\noindent As is easily seen -- if $D \subseteq {\mathbb{F}}_q^n$ is a vector space of dimension $m$
then it holds that 
 $\# \bar{\rho}\big( D \backslash \{ \vec{0}\} \big)=m$. 
(Actually, any set 
$\{\vec{d}_1, \ldots , \vec{d}_m\}\subseteq D\backslash \{ \vec{0}\}$ with
$\bar{\rho}(\vec{d}_1)< \cdots <\bar{\rho}(\vec{d}_m)$ constitutes a
basis for $D$).
The following result is a slight modification of the material in~\cite{AG}.
\begin{proposition}\label{prothe}\label{propthe}
Let $D \subseteq {\mathbb{F}}_q^n$ be a vector space of dimension at
least $1$. The support size of $D$ satisfies
\begin{equation}
\# {\mbox{Supp}} (D) \geq \# \cup_{i \in \bar{\rho}(D \backslash \{
    \vec{0} \})} \Lambda_i. \label{eqbound}
\end{equation}
\end{proposition}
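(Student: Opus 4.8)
The plan is to attach to every element of $\bigcup_{i\in\bar\rho(D\setminus\{\vec 0\})}\Lambda_i$ a vector of $\mathbb{F}_q^n$ supported inside $\mathrm{Supp}(D)$, arrange these vectors to be linearly independent, and then compare with the dimension of $V_{\mathrm{Supp}(D)}$, which equals $\#\mathrm{Supp}(D)$. Concretely, put $m=\dim D$ and, using the observation stated just before the proposition, write $\bar\rho(D\setminus\{\vec 0\})=\{i_1<\cdots<i_m\}$ and fix vectors $\vec d_1,\ldots,\vec d_m\in D$ with $\bar\rho(\vec d_k)=i_k$ (these form a basis of $D$). Set $L=\mathrm{Supp}(D)$ and $T=\bigcup_{k=1}^m\Lambda_{i_k}$; we may assume $T\neq\emptyset$, the case $T=\emptyset$ being trivial. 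By Definition~\ref{defdif}, for each $l\in T$ there are $k=k(l)$ and $j=j(l)\in\mathcal{J}$ with $(i_k,j)$ OWB and $\bar\rho(\vec b_{i_k}\ast\vec b_j)=l$, and we set $\vec e_l=\vec d_{k(l)}\ast\vec b_{j(l)}$.

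The key step is the identity $\bar\rho(\vec e_l)=l$, and this is where the OWB hypothesis is used. Writing $\vec d_{k(l)}=\sum_{i'\le i_{k(l)}}\alpha_{i'}\vec b_{i'}$ with $\alpha_{i_{k(l)}}\neq 0$, we get $\vec e_l=\sum_{i'\le i_{k(l)}}\alpha_{i'}\,(\vec b_{i'}\ast\vec b_{j(l)})$; since $(i_{k(l)},j(l))$ is OWB, every term with $i'<i_{k(l)}$ satisfies $\bar\rho(\vec b_{i'}\ast\vec b_{j(l)})<\bar\rho(\vec b_{i_{k(l)}}\ast\vec b_{j(l)})=l$, so the ``leading'' term $\alpha_{i_{k(l)}}(\vec b_{i_{k(l)}}\ast\vec b_{j(l)})$ determines $\bar\rho(\vec e_l)=l$. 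Here one uses only the elementary facts that $\bar\rho(\lambda\vec x)=\bar\rho(\vec x)$ for $\lambda\in\mathbb{F}_q^{*}$ and that $\bar\rho(\vec x+\vec y)\le\max\{\bar\rho(\vec x),\bar\rho(\vec y)\}$ with equality when the two values differ, both immediate from the definition of $\bar\rho$.

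It remains to conclude. The vectors $\{\vec e_l\mid l\in T\}$ have pairwise distinct nonzero values $\bar\rho(\vec e_l)=l$, hence are linearly independent: in a vanishing linear combination, applying $\bar\rho$ and looking at the highest index present forces the corresponding coefficient to vanish, and one descends. Moreover $\mathrm{Supp}(\vec e_l)\subseteq\mathrm{Supp}(\vec d_{k(l)})\subseteq L$, so each $\vec e_l$ lies in $V_L=\{\vec x\in\mathbb{F}_q^n\mid x_i=0\text{ for }i\notin L\}$, a space of dimension $\#L$. Therefore $\#T=\#\{\vec e_l\mid l\in T\}\le\dim V_L=\#L=\#\mathrm{Supp}(D)$, which is exactly the inequality~(\ref{eqbound}). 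The only genuinely delicate point is the middle step, namely making the OWB hypothesis do its job; the rest is the standard triangular-independence bookkeeping of the order-bound method.
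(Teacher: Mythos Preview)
Your proof is correct and follows essentially the same route as the paper's: both pick, for each $l$ in the union, a vector $\vec d\ast\vec b_j$ with $\bar\rho(\vec d\ast\vec b_j)=l$ via the OWB condition, deduce linear independence from the distinct $\bar\rho$-values, and compare with the dimension of the space supported on $\mathrm{Supp}(D)$ (the paper phrases this last step as $\dim(D\ast\mathbb{F}_q^n)=\#\mathrm{Supp}(D)$, which is the same observation). Your write-up is in fact more explicit about why OWB forces $\bar\rho(\vec e_l)=l$, a detail the paper leaves implicit.
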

\begin{proof}
Let $l_1 < \cdots < l_\sigma$ be the
elements in $ \cup_{i \in \bar{\rho}(D \backslash \{
    \vec{0} \})} \Lambda_i$ and let $i_1, \ldots , i_\sigma$ and
$j_1, \ldots , j_\sigma $ be such that for $s=1, \ldots , \sigma$ it
holds that:
\begin{itemize}
\item $i_s \in \bar{\rho}(D \backslash \{ \vec{0} \})$,
\item $(i_s,j_s)$ is OWB and $\bar{\rho}(\vec{b}_{i_s} \ast \vec{b}_{j_s})=l_s$.
\end{itemize} 
Choose $\vec{d}_1, \ldots \vec{d}_\sigma \in D$ 
with
$\bar{\rho}(\vec{d}_s)=i_s$, $s=1, \ldots , \sigma$. Clearly
$\bar{\rho}(\vec{d}_s \ast \vec{b}_{j_s})=l_s$ and therefore
$\vec{d}_1 \ast \vec{b}_{j_1}, \ldots , \vec{d}_{\sigma}\ast
\vec{b}_{j_\sigma}$ are linearly independent. In conclusion $D \ast
    {\mathbb{F}}_q^n= \{\vec{d} \ast \vec{c} \mid \vec{d} \in D,
      \vec{c} \in {\mathbb{F}}_q^n\}$ is of dimension at least
      $\sigma$. The
      dimension of $D \ast {\mathbb{F}}_q^n$ equals the size of the
      support of $D$ and the proposition follows. 
\end{proof}

\noindent We now turn to RGHW. Observe that although $C_2 \subsetneq C_1$ implies
 $\bar{\rho}(C_2) \subsetneq \bar{\rho}(C_1)$,
it does not always hold that 
$\vec{c} \in C_1 \backslash C_2$ implies $\bar{\rho}(\vec{c}) \in \bar{\rho}(C_1)\backslash
\bar{\rho}(C_2)$. However, some observations can still be made.
\begin{theorem}\label{thethat}
Consider linear codes $C_2 \subsetneq C_1$, $\dim (C_1)=k_1$, $\dim (C_2)=k_2$. Let $u$ be the smallest element in $\bar{\rho}(C_1)$ that is not in
  $\bar{\rho}(C_2)$. For $m=1, \ldots , k_1 - k_2$ we have
\begin{eqnarray}
M_m(C_1,C_2)&\geq &\min \big\{ \# \cup_{s=1}^m \Lambda_{i_s} \mid u\leq
i_1 < \cdots < i_m, \nonumber \\
&&\, \, \, \, \, \, \, \, \, \, \, \, \, \, \, \, \, \, \, \, \, \, \, \, \, \, \, \, \, \, \, \, \, \, \, \, \, \, \, \, \, \, \, \, \, \, \, \, \, \, \, \, \, \, 
i_1, \ldots ,
i_m \in \bar{\rho}(C_1 \backslash \{\vec{0}\})\big\}. \nonumber
\end{eqnarray}
\end{theorem}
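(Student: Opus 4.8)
The plan is to bound the support size of an arbitrary competing subspace directly via Proposition~\ref{prothe}, and then to check that the pivot set this subspace produces satisfies the two constraints appearing on the right-hand side, namely that its elements lie in $\bar{\rho}(C_1\setminus\{\vec{0}\})$ and that the smallest one is at least $u$. So let $D\subseteq C_1$ be \emph{any} subspace with $\dim D=m$ and $D\cap C_2=\{\vec{0}\}$, as in Definition~\ref{defrelham}. Since $\dim D=m$, the observation preceding Proposition~\ref{prothe} gives $\#\bar{\rho}(D\setminus\{\vec{0}\})=m$; write its elements as $i_1<i_2<\cdots<i_m$. Because $D\subseteq C_1$ we have $i_1,\ldots,i_m\in\bar{\rho}(C_1\setminus\{\vec{0}\})$, and Proposition~\ref{prothe} yields $\#\Supp(D)\geq\#\bigcup_{s=1}^m\Lambda_{i_s}$.

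The only additional fact needed is $i_1\geq u$, i.e.\ $\bar{\rho}(\vec{c})\geq u$ for every nonzero $\vec{c}\in D$. Suppose, for contradiction, that some $\vec{c}\in D\setminus\{\vec{0}\}$ has $\bar{\rho}(\vec{c})<u$. Then $\bar{\rho}(\vec{c})\in\bar{\rho}(C_1\setminus\{\vec{0}\})$, and since $\bar{\rho}(\vec{c})<u$ the minimality of $u$ forces $\bar{\rho}(\vec{c})\in\bar{\rho}(C_2\setminus\{\vec{0}\})$. Hence there is a nonzero $\vec{c}_0\in C_2$ with $\bar{\rho}(\vec{c}_0)=\bar{\rho}(\vec{c})$, and a nonzero scalar $\mu_0$ such that $\bar{\rho}(\vec{c}-\mu_0\vec{c}_0)<\bar{\rho}(\vec{c})$ or $\vec{c}-\mu_0\vec{c}_0=\vec{0}$. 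The vector $\vec{c}-\mu_0\vec{c}_0$ still lies in $C_1$, and if it is nonzero its $\bar{\rho}$-value is again $<u$, so the argument repeats; since $\bar{\rho}$ strictly decreases the process terminates, exhibiting $\vec{c}$ as an $\ff$-linear combination of elements of $C_2$. Thus $\vec{c}\in C_2$, contradicting $D\cap C_2=\{\vec{0}\}$ and $\vec{c}\neq\vec{0}$. (Alternatively, a dimension count using $\dim\big(W\cap\mathrm{Span}\{\vec{b}_1,\ldots,\vec{b}_{u-1}\}\big)=\#\{\,i\in\bar{\rho}(W\setminus\{\vec{0}\})\mid i\leq u-1\,\}$ for $W=C_1$ and $W=C_2$, together with $\bar{\rho}(C_2\setminus\{\vec{0}\})\subseteq\bar{\rho}(C_1\setminus\{\vec{0}\})$ and minimality of $u$, shows $C_1\cap\mathrm{Span}\{\vec{b}_1,\ldots,\vec{b}_{u-1}\}=C_2\cap\mathrm{Span}\{\vec{b}_1,\ldots,\vec{b}_{u-1}\}$, which yields the same conclusion.)

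Combining the two steps, $u\leq i_1<\cdots<i_m$ with $i_1,\ldots,i_m\in\bar{\rho}(C_1\setminus\{\vec{0}\})$, so $\#\Supp(D)\geq\#\bigcup_{s=1}^m\Lambda_{i_s}$, and the latter is by definition at least the minimum on the right-hand side of the theorem. Since $D$ was an arbitrary admissible subspace, taking the minimum of $\#\Supp(D)$ over all such $D$ gives the asserted lower bound on $M_m(C_1,C_2)$. The one genuinely new ingredient beyond Proposition~\ref{prothe} is the descent in the middle paragraph, and I do not expect it to be a real obstacle: it is routine Gaussian elimination once one notices that, below the threshold $u$, the leading positions occurring among codewords of $C_1$ are exactly those occurring among codewords of $C_2$. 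The substance of the theorem is carried by Proposition~\ref{prothe}.
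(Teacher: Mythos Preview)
Your proof is correct and follows essentially the same route as the paper: apply Proposition~\ref{prothe} to an arbitrary admissible subspace $D$, and observe that the resulting pivot indices $i_1<\cdots<i_m$ lie in $\bar{\rho}(C_1\setminus\{\vec{0}\})$ with $i_1\geq u$. The paper's own proof simply asserts the inequality $u\leq i_1$ without justification; your descent argument (and the alternative dimension count) correctly supplies the missing detail, namely that any $\vec{c}\in C_1$ with $\bar{\rho}(\vec{c})<u$ already lies in $C_2$.
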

\begin{proof}
If $D$ is an $m$-dimensional subspace of $C_1$ with $D \cap
C_2=\{\vec{0} \}$ then we can write $\bar{\rho}(D\backslash \{ \vec{0}\})=\{i_1, \ldots , i_m\} \subseteq
\bar{\rho}(C_1 \backslash \{ \vec{0}  \})$ with $u
\leq i_1 < \cdots < i_m$. The theorem now follows from Proposition~\ref{prothe}.
\end{proof}
\begin{corollary}\label{corcor}
Consider a $k_1$-dimensional code $C_1$, say $C_1={\mbox{Span}} \{ \vec{f}_1, \ldots ,
\vec{f}_{k_1}\}$, where without loss of generality we assume  $\bar{\rho}(\vec{f}_1) < \cdots <
\bar{\rho}(\vec{f}_{k_1})$. For $k_2 < k_1$ let $C_2={\mbox{Span}}\{
\vec{f}_1, \ldots , \vec{f}_{k_2} \}$. We have
\begin{eqnarray}
M_m(C_1,C_2)&\geq &\min \big\{ \# \cup_{s=1}^m \Lambda_{i_s} \mid i_1 <
\cdots < i_m, \nonumber \\
&&\, \, \,  \, \, \,  \, \, \, i_1, \ldots , i_m \in \{
\bar{\rho}(\vec{f}_{k_2+1}), \ldots , \bar{\rho}(\vec{f}_{k_1})\} \big\}.
\end{eqnarray}
\end{corollary}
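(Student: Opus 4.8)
The plan is to read this off from Theorem~\ref{thethat}; the entire content is to make the index set appearing there explicit in the special configuration of the corollary, so almost no new argument is needed.

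First I would identify $\bar{\rho}(C_1\backslash\{\vec{0}\})$ and $\bar{\rho}(C_2\backslash\{\vec{0}\})$. By the observation recalled just before Proposition~\ref{prothe}, a subspace $D$ of dimension $m$ satisfies $\#\bar{\rho}(D\backslash\{\vec{0}\})=m$. Hence $\bar{\rho}(C_1\backslash\{\vec{0}\})$ has exactly $k_1$ elements; since $\bar{\rho}(\vec{f}_1)<\cdots<\bar{\rho}(\vec{f}_{k_1})$ already exhibits $k_1$ distinct members of this set, we get $\bar{\rho}(C_1\backslash\{\vec{0}\})=\{\bar{\rho}(\vec{f}_1),\ldots,\bar{\rho}(\vec{f}_{k_1})\}$. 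Applying the same reasoning to $C_2=\mbox{Span}\{\vec{f}_1,\ldots,\vec{f}_{k_2}\}$ gives $\bar{\rho}(C_2\backslash\{\vec{0}\})=\{\bar{\rho}(\vec{f}_1),\ldots,\bar{\rho}(\vec{f}_{k_2})\}$.

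Next I would pin down the integer $u$ of Theorem~\ref{thethat}, the smallest element of $\bar{\rho}(C_1)$ not in $\bar{\rho}(C_2)$. Because the values $\bar{\rho}(\vec{f}_{k_2+1})<\cdots<\bar{\rho}(\vec{f}_{k_1})$ are all strictly larger than $\bar{\rho}(\vec{f}_{k_2})$, none of them equals $0$ or any of $\bar{\rho}(\vec{f}_1),\ldots,\bar{\rho}(\vec{f}_{k_2})$; therefore $\bar{\rho}(C_1)\backslash\bar{\rho}(C_2)=\{\bar{\rho}(\vec{f}_{k_2+1}),\ldots,\bar{\rho}(\vec{f}_{k_1})\}$ and its least element is $u=\bar{\rho}(\vec{f}_{k_2+1})$. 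Now feed this into Theorem~\ref{thethat}: since $\bar{\rho}(C_1\backslash\{\vec{0}\})$ is linearly ordered as $\bar{\rho}(\vec{f}_1)<\cdots<\bar{\rho}(\vec{f}_{k_1})$, a tuple with $u\leq i_1<\cdots<i_m$ and $i_1,\ldots,i_m\in\bar{\rho}(C_1\backslash\{\vec{0}\})$ is exactly a tuple $i_1<\cdots<i_m$ with all entries in $\{\bar{\rho}(\vec{f}_{k_2+1}),\ldots,\bar{\rho}(\vec{f}_{k_1})\}$ (the constraint $i_1\geq u$ together with strict increase rules out the indices $1,\ldots,k_2$). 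The two families of admissible tuples coincide, so the minimum in Theorem~\ref{thethat} equals the minimum in the statement, and the bound follows.

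There is essentially no obstacle: the only place calling for a moment's care is the determination of $u$, i.e.\ verifying that $\bar{\rho}(C_2)$ meets none of $\bar{\rho}(\vec{f}_{k_2+1}),\ldots,\bar{\rho}(\vec{f}_{k_1})$, which is immediate from the strict monotonicity of the $\bar{\rho}(\vec{f}_j)$; the rest is bookkeeping with the two index sets.
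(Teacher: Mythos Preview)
Your proposal is correct and follows exactly the intended route: the corollary is stated in the paper without a separate proof because it is an immediate specialization of Theorem~\ref{thethat}, and you have spelled out precisely that specialization (identifying $\bar{\rho}(C_i\backslash\{\vec{0}\})$, computing $u=\bar{\rho}(\vec{f}_{k_2+1})$, and observing that the admissible index tuples coincide). There is nothing to add.
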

Next we treat dual codes.
\begin{definition}
For $\vec{c} \in {\mathbb{F}}_q^n \backslash \{ \vec{0}\}$ define
$M(\vec{c})$ to be the smallest number $i \in {\mathcal{J}}$ such that
$\vec{c} \cdot \vec{b}_i \neq 0$. Here $\vec{a} \cdot \vec{b}$ means the
usual inner product between $\vec{a}$ and $\vec{b}$.
\end{definition}
It is clear that for an $m$-dimensional space $D$ we have $\#
M(D\backslash \{\vec{0}\})=m$. Also it is clear that if $D \subseteq
C^\perp$, where $C$ is a linear code, then $M(D\backslash \{\vec{0}\})
\cap \bar{\rho}(C) = \emptyset$.
\begin{definition} For $l \in {\mathcal{J}}$ define
$$V_l=\{i \in {\mathcal{J}} \mid \bar{\rho}(\vec{b}_i \ast \vec{b}_j)=l
  {\mbox{ for some }}\vec{b}_j \in {\mathcal{B}} {\mbox{ with }} (i,j)
  {\mbox{ OWB}}\}.$$
\end{definition}
The following result is proved by slightly modifying the proof
of~\cite[Prop.\ 3.12]{heijnenpellikaan} and \cite[Th.\ 5]{geithom}. 
\begin{proposition}
Let $D \subseteq {\mathbb{F}}_q^n$ be a space of dimension at least
$1$. We have 
$$\# {\mbox{Supp}} (D) \geq \# \cup_{l \in M(D\backslash \{\vec{0}\})} V_l.$$
\end{proposition}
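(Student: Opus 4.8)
The plan is to adapt the argument used for Proposition~\ref{prothe}, with the function $\bar{\rho}$ replaced by $M$. Whereas in the primary case one produces, from an $m$-dimensional $D$, a family of linearly independent vectors of the form $\vec{d}\ast\vec{b}_j$ distinguished by their $\bar{\rho}$-values, here I would build linearly independent vectors still of the form $\vec{c}\ast\vec{b}_j$ with $\vec{c}\in D$, but now distinguished by their $M$-values. Since all of them are supported inside $\mathrm{Supp}(D)$, their number is a lower bound for $\#\mathrm{Supp}(D)$.

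Concretely, set $\sigma=\#\bigcup_{l\in M(D\setminus\{\vec{0}\})}V_l$. For each $l\in M(D\setminus\{\vec{0}\})$ fix a vector $\vec{c}^{(l)}\in D$ with $M(\vec{c}^{(l)})=l$; such a vector exists by the definition of $M(D\setminus\{\vec{0}\})$. For each $i\in\bigcup_{l}V_l$ select one $l=l(i)\in M(D\setminus\{\vec{0}\})$ with $i\in V_{l(i)}$ and, by the definition of $V_{l(i)}$, an index $j(i)\in\mathcal{J}$ with $(i,j(i))$ OWB and $\bar{\rho}(\vec{b}_i\ast\vec{b}_{j(i)})=l(i)$. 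Put $\vec{w}_i:=\vec{c}^{(l(i))}\ast\vec{b}_{j(i)}$.

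The key step, and the place where the OWB hypothesis and the definition of $M$ interlock, is to check that $M(\vec{w}_i)=i$. For $k<i$, the OWB property of $(i,j(i))$ gives $\bar{\rho}(\vec{b}_k\ast\vec{b}_{j(i)})<\bar{\rho}(\vec{b}_i\ast\vec{b}_{j(i)})=l(i)$, hence $\vec{b}_k\ast\vec{b}_{j(i)}\in\mathrm{Span}\{\vec{b}_1,\dots,\vec{b}_{l(i)-1}\}$, which is orthogonal to $\vec{c}^{(l(i))}$ since $M(\vec{c}^{(l(i))})=l(i)$; therefore $\vec{w}_i\cdot\vec{b}_k=\vec{c}^{(l(i))}\cdot(\vec{b}_k\ast\vec{b}_{j(i)})=0$. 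For $k=i$, writing $\vec{b}_i\ast\vec{b}_{j(i)}=\sum_{k'\le l(i)}\gamma_{k'}\vec{b}_{k'}$ with $\gamma_{l(i)}\ne0$ and using again $M(\vec{c}^{(l(i))})=l(i)$, one gets $\vec{w}_i\cdot\vec{b}_i=\gamma_{l(i)}\,(\vec{c}^{(l(i))}\cdot\vec{b}_{l(i)})\ne0$. Thus $M(\vec{w}_i)=i$, and in particular $\vec{w}_i\ne\vec{0}$.

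Finally, as $i$ runs through the distinct elements of $\bigcup_l V_l$, the vectors $\vec{w}_i$ have pairwise distinct $M$-values and are therefore linearly independent, because vectors with pairwise distinct $M$-values are independent. Each $\vec{w}_i$ is supported on $\mathrm{Supp}(\vec{c}^{(l(i))})\subseteq\mathrm{Supp}(D)$, so these $\sigma$ independent vectors all lie in $\{\vec{x}\in\mathbb{F}_q^n\mid x_k=0\text{ for all }k\notin\mathrm{Supp}(D)\}$, a space of dimension $\#\mathrm{Supp}(D)$; hence $\#\mathrm{Supp}(D)\ge\sigma$. I expect the only genuine obstacle to be the bookkeeping involved in fixing a single representative $\vec{c}^{(l(i))}$ and witness $j(i)$ per index $i$ together with the clean verification of $M(\vec{w}_i)=i$; the rest is the same "distinct filtration values imply independence" packaging already used in Proposition~\ref{prothe}.
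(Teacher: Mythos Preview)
Your proof is correct and is exactly the ``slight modification'' of the primary-code argument that the paper alludes to (the paper itself does not spell out a proof but refers to \cite[Prop.\ 3.12]{heijnenpellikaan} and \cite[Th.\ 5]{geithom}). The crucial identity $(\vec{a}\ast\vec{b})\cdot\vec{c}=\vec{a}\cdot(\vec{b}\ast\vec{c})$, your verification that $M(\vec{w}_i)=i$ via the OWB property, and the conclusion through linear independence inside the coordinate subspace supported on $\mathrm{Supp}(D)$ are precisely the standard ingredients.
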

From the above discussion we derive
\begin{theorem}\label{theny}
Consider linear codes $C_2 \subsetneq C_1$. Let $u$ be the largest
element in $\bar{\rho}(C_1 \backslash \{\vec{0}\})$. For $m=1, \ldots
, \dim (C_1)-\dim (C_2)=\dim (C_2^\perp) -\dim (C_1^\perp)$ we have
\begin{eqnarray}
M_m(C_2^\perp,  C_1^\perp) &\geq &\min \{ \#  \cup_{s=1}^m V_{i_s}
\mid 1 \leq i_1 < \cdots <i_m \leq u, \nonumber \\
&&{\mbox{ \ \ \ \ \ \ \ \ \ \ \ \ \ \ \ \ \ \ \ \ \  \ \ \ \ \ \ \ }} i_1, \ldots , i_m \notin \bar{\rho}(C_2)\}.
\end{eqnarray}
\end{theorem}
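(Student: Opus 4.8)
The plan is to mirror exactly the argument used for the primary codes in Theorem~\ref{thethat}, only now working with the functional $M$ in place of $\bar\rho$, and with the dual Feng-Rao proposition (the one involving the sets $V_l$) in place of Proposition~\ref{prothe}. The starting point is the observation already recorded in the excerpt: if $D$ is a subspace of $(C_2^\perp)$ then $M(D\backslash\{\vec 0\})\cap\bar\rho(C_2)=\emptyset$, and $\#M(D\backslash\{\vec 0\})=\dim D$.

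First I would take an $m$-dimensional subspace $D\subseteq C_2^\perp$ realizing the RGHW, i.e.\ with $D\cap C_1^\perp=\{\vec 0\}$ and $\#\Supp(D)=M_m(C_2^\perp,C_1^\perp)$ (such $D$ exists by Definition~\ref{defrelham} applied to the pair $C_1^\perp\subsetneq C_2^\perp$). Write $M(D\backslash\{\vec 0\})=\{i_1,\dots,i_m\}$ with $i_1<\cdots<i_m$. The key structural claims to verify are: (i) none of the $i_s$ lies in $\bar\rho(C_2)$ — this is immediate from $D\subseteq C_2^\perp$ and the displayed remark above; and (ii) $i_m\le u$, where $u$ is the largest element of $\bar\rho(C_1\backslash\{\vec 0\})$. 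For (ii) the point is that $M(\vec c)\le u$ for every nonzero $\vec c\in C_2^\perp$: since $C_1$ has dimension $k_1$ and $u=\bar\rho(\vec f_{k_1})$ for a suitable basis, the space $\mathrm{Span}\{\vec b_{u+1},\dots,\vec b_n\}$ meets $C_1^\perp$ only in... — more carefully, one argues that $\mathrm{Span}\{\vec b_1,\dots,\vec b_u\}+C_1$ already spans a space containing a vector with $\bar\rho$ equal to each of $1,\dots,u$, so that $C_1^\perp\subseteq C_2^\perp$ cannot avoid having $M$-value at most $u$; alternatively, and more cleanly, $\bar\rho(C_1)\subseteq\{1,\dots,u\}$ forces every codeword of $C_1^\perp$ (hence of any subspace of $C_2^\perp$ once one checks it for $C_2^\perp$ itself) to have $M$-value $\le u$ because a vector orthogonal to all of $\mathrm{Span}\{\vec b_1,\dots,\vec b_u\}$ would lie in a complementary space disjoint from... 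This is the one spot that needs care.

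Granting (i) and (ii), I would then apply the dual Feng-Rao proposition to $D$, which gives
\begin{equation*}
\#\Supp(D)\ \ge\ \#\bigcup_{l\in M(D\backslash\{\vec 0\})}V_l\ =\ \#\bigcup_{s=1}^m V_{i_s}.
\end{equation*}
Since the $i_1<\cdots<i_m$ obtained this way satisfy $1\le i_1<\cdots<i_m\le u$ and $i_1,\dots,i_m\notin\bar\rho(C_2)$, the right-hand side is at least the minimum appearing in the statement, taken over all such admissible tuples. Because $D$ was chosen to realize $M_m(C_2^\perp,C_1^\perp)$, the left-hand side equals $M_m(C_2^\perp,C_1^\perp)$, and the theorem follows.

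The main obstacle is step (ii): establishing the bound $i_m\le u$, i.e.\ that every nonzero vector in $C_2^\perp$ has $M$-value at most $u=\max\bar\rho(C_1\backslash\{\vec 0\})$. The subtlety is that passing to duals does not interact with the flag $\mathrm{Span}\{\vec b_1,\dots,\vec b_i\}$ in a naive way, so one cannot just "dualize" the inequality; instead one should argue that a vector $\vec c$ with $M(\vec c)>u$ is orthogonal to $\vec b_1,\dots,\vec b_u$, hence orthogonal to a complement of $C_1$ large enough to force $\vec c\in C_1^{\perp}$'s complement — and since we only need the claim for $\vec c\in C_2^\perp$ with $\vec c\notin C_1^\perp$, one derives a contradiction with $\dim C_2^\perp-\dim C_1^\perp$ counting, or more directly notes that $\#\bar\rho(C_1\backslash\{\vec 0\})=k_1$ together with $\bar\rho(C_1)\subseteq\{1,\dots,u\}$ pins down enough coordinates. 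I would phrase this as a short lemma (every $\vec c\in C_2^\perp\backslash\{\vec 0\}$ has $M(\vec c)\le u$) and prove it by the flag/counting argument before invoking it here. Everything else is a routine transcription of the primary-code proof.
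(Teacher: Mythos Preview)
Your overall strategy is exactly the one the paper intends (the paper simply writes ``From the above discussion we derive'' and leaves the details implicit): take a minimizing $m$-dimensional $D\subseteq C_2^\perp$ with $D\cap C_1^\perp=\{\vec 0\}$, set $\{i_1<\cdots<i_m\}=M(D\backslash\{\vec 0\})$, verify these indices are admissible for the minimum, and apply the dual Feng--Rao proposition. Point (i) is correct as you state it.

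Point (ii), however, is genuinely muddled in your write-up, and one of your formulations is false. The claim ``every nonzero $\vec c\in C_2^\perp$ has $M(\vec c)\le u$'' is wrong in general: $M(C_2^\perp\backslash\{\vec 0\})$ is exactly the complement of $\bar\rho(C_2)$ in $\{1,\dots,n\}$, so it certainly contains all of $\{u+1,\dots,n\}$. What you actually need---and what your final paragraph gropes toward---is the much weaker statement that every nonzero $\vec c\in D$ has $M(\vec c)\le u$. The clean one-line argument is this: since $u=\max\bar\rho(C_1\backslash\{\vec 0\})$ we have $C_1\subseteq\mathrm{Span}\{\vec b_1,\dots,\vec b_u\}$; hence if $M(\vec c)>u$ then $\vec c\cdot\vec b_i=0$ for $i=1,\dots,u$, so $\vec c\perp C_1$, i.e.\ $\vec c\in C_1^\perp$, contradicting $D\cap C_1^\perp=\{\vec 0\}$. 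There is no need for a counting argument, for ``complements of $C_1$'', or for a separate lemma---replace your discussion of (ii) by this observation and the proof is complete and matches the paper's.
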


\noindent To apply Theorem~\ref{thethat},
Corollary~\ref{corcor} and Theorem~\ref{theny} we need information on which 
pairs are OWB. This suggests the use of a supporting algebra. One
class of algebras that works well is the order
domains~\cite{handbook,MR1826339,GP}. In the present paper we will 
concentrate on the most prominent example of order domain codes
--  namely one-point algebraic geometric codes.
\begin{remark}
In our exposition we used a single (but arbitrary) basis ${\mathcal{B}}$ for
${\mathbb{F}}_q^n$ as a vector space over
${\mathbb{F}}_q$. Following~\cite{pellikaan93efficient} one could reformulate all the
above results in a more general setting that uses three bases ${\mathcal{U}}$,
${\mathcal{V}}$, and ${\mathcal{W}}$. This point of view is important
when one considers affine variety codes~\cite{salazar}, but it does not improve the
results for order domain codes. In~\cite{geilmartin2013further} and \cite{geil2013improvement}, the concept of
OWB was relaxed giving new improved Feng-Rao bounds. All the above
results could be reformulated in this setting -- but again -- for
order domain codes the results stay unchanged. 
\end{remark}

\section{One-point algebraic geometric codes}\label{sec5}
Given an algebraic function field $F$ of transcendence degree one, let
$P_1, \ldots , P_n$, $Q$ be distinct rational places. For $f \in F$ write
$\rho(f)=-\nu_Q(f)$, where $\nu_Q$ is the valuation at $Q$, and denote by $H(Q)$ the Weierstrass semigroup of
 $Q$. That is, $H(Q)=\rho\big( \cup_{\mu=0}^\infty
{\mathcal{L}}(\mu Q) \big)$. In the following let $\{f_{\lambda} \mid
\lambda \in H(Q)\}$ be any fixed basis for $R=\cup_{\mu=0}^\infty
        {\mathcal{L}}(\mu Q)$ with $\rho(f_\lambda)=\lambda$ for all
        $\lambda \in H(Q)$. Let 
$D=P_1+\cdots +P_n$ and define
\begin{eqnarray}
H^\ast (Q)&=&\{ \mu \mid C_{\mathcal{L}}(D,\mu Q) \neq
C_{\mathcal{L}}(D,(\mu-1) Q)\} \nonumber \\
&=&\{\gamma_1, \ldots , \gamma_n\} \subsetneq H(Q). \label{eqenting} 
\end{eqnarray}
Here, the enumeration is chosen such that $\gamma_1 < \cdots < \gamma_n$.
 Consider the map ${\mbox{ev}}: F \rightarrow {\mathbb{F}}_q^n$ given
 by ${\mbox{ev}}(f)=(f(P_1), \ldots , f(P_n))$. 
The set  
\begin{equation}
\{ \vec{b}_1={\mbox{ev}}(f_{\gamma_1}), \ldots ,
        \vec{b}_n={\mbox{ev}}(f_{\gamma_n})\} \label{eqhoejsa}
\end{equation}
clearly is a basis for ${\mathbb{F}}_q^n$ and by~\cite[Pro.\ 27]{AG} a pair $(i,j)$ is OWB if $\rho (f_{\gamma_i })+\rho(f_{\gamma_j}) =
\rho (f_{\gamma_l})$, i.\ e.\ $\gamma_i+\gamma_j=\gamma_l$, in which case of course $\bar{\rho} (\vec{b}_i \ast
\vec{b}_j)=l$. From~\cite[Pro.\ 28]{AG} we know that if
$\delta \in H^\ast (Q)$ and $\alpha, \beta \in H(Q)$ satisfy
$\alpha+\beta=\delta$ then we have  $\alpha , \beta \in H^\ast (Q)$. We therefore get the following lemma.
\begin{lemma}\label{proptraekker}
Let $\{\vec{b}_1, \ldots , \vec{b}_n\}$ be as above. For $i \in
{\mathcal{J}}$ it holds that
$$
\{ l \in {\mathcal{J}} \mid \gamma_l-\gamma_i \in H(Q)\} \subseteq
\Lambda_i$$
where $\Lambda_i$ is as in Definition~\ref{defdif}.
\end{lemma}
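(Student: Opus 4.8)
The plan is to verify directly from the definitions that every $l$ with $\gamma_l - \gamma_i \in H(Q)$ lies in $\Lambda_i$, by exhibiting a witness $j$ making $(i,j)$ OWB with $\bar\rho(\vec b_i \ast \vec b_j) = l$. So fix $i \in \mathcal J$ and suppose $\gamma_l - \gamma_i \in H(Q)$; write $\beta = \gamma_l - \gamma_i$. First I would show that $\beta \in H^\ast(Q)$: since $\gamma_l \in H^\ast(Q)$ and $\gamma_i, \beta \in H(Q)$ with $\gamma_i + \beta = \gamma_l$, the cited result \cite[Pro.\ 28]{AG} quoted just above the lemma gives $\gamma_i, \beta \in H^\ast(Q)$. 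Hence $\beta = \gamma_j$ for some $j \in \mathcal J$.

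Next I would check that $(i,j)$ is OWB and that $\bar\rho(\vec b_i \ast \vec b_j) = l$. By the remark preceding the lemma (from \cite[Pro.\ 27]{AG}), a pair $(i,j)$ is OWB and satisfies $\bar\rho(\vec b_i \ast \vec b_j) = l$ precisely when $\gamma_i + \gamma_j = \gamma_l$, since then $\rho(f_{\gamma_i}) + \rho(f_{\gamma_j}) = \gamma_i + \gamma_j = \gamma_l = \rho(f_{\gamma_l})$. Here $\gamma_i + \gamma_j = \gamma_i + \beta = \gamma_l$ by construction, so both conclusions follow immediately. Therefore $l \in \Lambda_i$ by Definition~\ref{defdif}, which is exactly the claimed inclusion.

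The argument is essentially a bookkeeping chain, so I do not expect a genuine obstacle; the only point requiring care is making sure the two cited facts from \cite{AG} are applied in the right direction — namely that $H^\ast(Q)$ is closed under "taking summands" (so $\gamma_l \in H^\ast(Q)$ forces its $H(Q)$-summands into $H^\ast(Q)$, giving us a legitimate index $j \in \mathcal J$), and that the OWB criterion is an "if" and not merely an "only if", so that the numerical identity $\gamma_i + \gamma_j = \gamma_l$ suffices to produce the required $j$ witnessing membership in $\Lambda_i$. Once both are in place the lemma is immediate.
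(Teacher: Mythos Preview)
Your proof is correct and follows precisely the route the paper intends: the lemma is stated as an immediate consequence of the two facts from \cite[Pro.\ 27, Pro.\ 28]{AG} quoted just before it, and you have simply spelled out that deduction carefully. There is nothing to add.
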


\begin{proposition}\label{the1}
Let $D \subseteq {\mathbb{F}}_q^n$ be a vector space of dimension
$m$. There exist unique numbers $\gamma_{i_1}< \cdots < \gamma_{i_m}$
in $H^\ast(Q)$ such that $\bar{\rho}(D \backslash \{\vec{0}\})=\{i_1,
\ldots , i_m\}$. The support of $D$ satisfies
\begin{eqnarray}
\# {\mbox{Supp}}(D)&\geq & \# \bigg( H^\ast (Q) \cap
\big( \cup_{s=1}^m(\gamma_{i_s}+H(Q))\big) \bigg) \label{eqdej}\\
&\geq& n-\gamma_{i_m} + \# \{\lambda \in \cup_{s=1}^{m-1}
(\gamma_{i_s}+H(Q)) \mid \lambda \notin \gamma_{i_m}+H(Q)\}. \label{eqihopla2}
\end{eqnarray}
\end{proposition}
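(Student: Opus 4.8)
The plan is to extract the index set $\bar{\rho}(D\backslash\{\vec{0}\})$, feed it into the Feng--Rao bound of Proposition~\ref{prothe} as sharpened by Lemma~\ref{proptraekker}, translate the outcome into the Weierstrass semigroup, and finish with an Ap\'ery-set count. First I would settle existence and uniqueness of the $\gamma_{i_s}$: since (as noted just before Proposition~\ref{prothe}) $\#\bar{\rho}(D\backslash\{\vec{0}\})=m$, write $\bar{\rho}(D\backslash\{\vec{0}\})=\{i_1,\dots,i_m\}$ with $i_1<\dots<i_m$; because the fixed basis is $\vec{b}_j=\ev(f_{\gamma_j})$ and $H^\ast(Q)=\{\gamma_1<\dots<\gamma_n\}$, the map $j\mapsto\gamma_j$ is the unique order isomorphism $\mathcal{J}\to H^\ast(Q)$, so $\gamma_{i_1}<\dots<\gamma_{i_m}$ are forced and depend only on $D$.

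For inequality (\ref{eqdej}) I would apply Proposition~\ref{prothe} to $D$, obtaining $\#\Supp(D)\ge\#\bigcup_{s=1}^m\Lambda_{i_s}$, and then invoke Lemma~\ref{proptraekker}, which gives $\{\,l\in\mathcal{J}\mid\gamma_l-\gamma_{i_s}\in H(Q)\,\}\subseteq\Lambda_{i_s}$ for each $s$. Under the bijection $l\mapsto\gamma_l$ the set on the left corresponds exactly to $H^\ast(Q)\cap(\gamma_{i_s}+H(Q))$, so taking the union over $s$ and using injectivity of $l\mapsto\gamma_l$ yields $\#\bigcup_{s=1}^m\Lambda_{i_s}\ge\#\bigl(H^\ast(Q)\cap\bigcup_{s=1}^m(\gamma_{i_s}+H(Q))\bigr)$, which is (\ref{eqdej}).

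For (\ref{eqihopla2}) I would set $W=\bigcup_{s=1}^m(\gamma_{i_s}+H(Q))$, so that (\ref{eqdej}) reads $\#\Supp(D)\ge\#(H^\ast(Q)\cap W)=n-\#(H^\ast(Q)\setminus W)$, the last equality because $\#H^\ast(Q)=n$. Since $\gamma_{i_m}+H(Q)\subseteq W$, the set $T:=W\setminus(\gamma_{i_m}+H(Q))$ coincides with $\{\lambda\in\bigcup_{s=1}^{m-1}(\gamma_{i_s}+H(Q))\mid\lambda\notin\gamma_{i_m}+H(Q)\}$, and $T$ together with $H^\ast(Q)\setminus W$ is a disjoint pair of finite subsets of $H(Q)$, neither of which meets $\gamma_{i_m}+H(Q)$; hence $(H^\ast(Q)\setminus W)\sqcup T\subseteq H(Q)\setminus(\gamma_{i_m}+H(Q))$. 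Assuming $\gamma_{i_m}>0$ (the case $\gamma_{i_m}=0$ forces $m=1$ and makes both sets empty), $H(Q)\setminus(\gamma_{i_m}+H(Q))$ is precisely the Ap\'ery set of $H(Q)$ with respect to $\gamma_{i_m}$ and so has exactly $\gamma_{i_m}$ elements, one in each residue class modulo $\gamma_{i_m}$. Therefore $\#(H^\ast(Q)\setminus W)+\#T\le\gamma_{i_m}$, giving $\#\Supp(D)\ge n-\gamma_{i_m}+\#T$, which is (\ref{eqihopla2}).

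The step needing the most care is the last one. The tempting move is to split $H^\ast(Q)\cap W$ into $H^\ast(Q)\cap(\gamma_{i_m}+H(Q))$ and $H^\ast(Q)\cap T$ and bound the two pieces separately, but the naive estimates ``$\#(H^\ast(Q)\cap(\gamma_{i_m}+H(Q)))\ge n-\gamma_{i_m}$'' and ``$T\subseteq H^\ast(Q)$'' both fail in general (already for Hermitian codes, where a surplus in one piece must compensate a deficit in the other). The right move is the single observation that the ``missing'' semigroup elements $H^\ast(Q)\setminus W$ and the ``extra'' elements $T$ together are pairwise incongruent modulo $\gamma_{i_m}$, i.e.\ embed into one Ap\'ery set; establishing the disjointness and semigroup membership behind that embedding is routine, but it is the only point where the argument is not purely formal.
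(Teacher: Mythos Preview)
Your proof is correct and follows essentially the same route as the paper's: Proposition~\ref{prothe} plus Lemma~\ref{proptraekker} for~(\ref{eqdej}), then the complement-in-$H^\ast(Q)$ trick together with the count $\#\bigl(H(Q)\setminus(\gamma_{i_m}+H(Q))\bigr)=\gamma_{i_m}$ for~(\ref{eqihopla2}). The only cosmetic difference is that the paper first replaces $H^\ast(Q)\setminus W$ by the larger set $H(Q)\setminus W$ and then computes $\#(H(Q)\setminus W)=\gamma_{i_m}-\#T$ exactly, whereas you embed the disjoint union $(H^\ast(Q)\setminus W)\sqcup T$ directly into the Ap\'ery set; the two bookkeepings are equivalent.
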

\begin{proof}
By Lemma~\ref{proptraekker} the
right side of~(\ref{eqdej}) is lower than or equal to $\#
\cup_{s=1}^m\Lambda_{i_s}$, and (\ref{eqdej}) therefore 
follows from Proposition~\ref{propthe}. Another way of writing the right
side of (\ref{eqdej}) is $n-\#\big( H^\ast(Q) \backslash
\cup_{s=1}^m(\gamma_{i_s}+H(Q))\big)$. This number is greater than or
equal to 
\begin{eqnarray}
&&n-\#\big( H(Q) \backslash
\cup_{s=1}^m(\gamma_{i_s}+H(Q))\big)\nonumber \\
&=&n-\# \big(H(Q)\backslash(\gamma_{i_m}+H(Q))\big)\nonumber \\
&&{\mbox{ \ \ \ \ \ \ \ \ \ \ \ }}  + \# \,  \{ \lambda
\in \cup_{s=1}^{m-1}(\gamma_{i_s}+H(Q)) \mid \lambda \notin \gamma_{i_m}+H(Q)\}.\nonumber
\end{eqnarray}
From~\cite[Lem.\ 5.15]{handbook} we know that for any
numerical semigroup $\Gamma$ and $\lambda \in \Gamma$, one has
$\lambda=\# \big( \Gamma \backslash (\lambda+\Gamma) \big)$. In
particular $\# \big( H(Q) \backslash
(\gamma_{i_m}+H(Q))\big)=\gamma_{i_m}$ and (\ref{eqihopla2}) follows.
\end{proof}
 
From~(\ref{eqihopla2}) we can obtain a
manageable bound on the RGHWs of
one-point algebraic geometric codes as we now explain. This bound
can even be used when one does not know $H^\ast (Q)$. Given
non-negative integers $\lambda_1
< \cdots < \lambda_m$ (note that we make no assumptions that
$\lambda_1, \ldots , \lambda_m \in H(Q)$) let
$i_j=\lambda_j-\lambda_m$, $j=1, \ldots , m-1$
and observe that 
\begin{eqnarray}
\# \{ \lambda \in \cup_{s=1}^{m-1} (\lambda_i+H(Q) \mid \lambda
  \notin \lambda_m+H(Q)\}\, \, \, \, \, \, \, \, \, \, \, \, \, \, \, \nonumber \\
=\#\{ \alpha \in \cup_{s=1}^{m-1}(i_s+H(Q)) \mid \alpha \notin H(Q)\}\label{eqihopla1}
\end{eqnarray}
since $\lambda$ is in the first set if and only if $\lambda-\lambda_m$
is in the second set. The function $Z$ in the definition below shall
help us estimate the last expression in~(\ref{eqihopla2}).
\begin{definition}\label{defZ}
Consider a numerical semigroup $\Gamma$ and a positive integer
$\mu$. Define $Z(\Gamma,\mu,1)=0$ and for $1 < m \leq \mu$
\begin{eqnarray}
Z(\Gamma, \mu,m)&=&\min \big\{ \# \{\alpha \in \cup_{s=1}^{m-1}(i_s+\Gamma)
\mid \alpha \notin \Gamma\} \mid \nonumber \\
&&{\mbox { \ \ \ \ \ \ \ \ \ \ \ }} -\mu+1 \leq i_1 < \cdots < i_{m-1}
\leq -1\big\}.
\end{eqnarray}
\end{definition}
We are now ready for the main result of the section.
\begin{theorem}\label{the2}
Let $\mu_1, \mu_2$ be positive integers with $\mu_2 <\mu_1$.\\ 
For $m=1, \ldots ,
\dim (C_{\mathcal{L}}(D,\mu_1Q))-\dim (C_{\mathcal{L}}(D,\mu_2Q))$ we have
\begin{eqnarray}
&&M_m(C_{\mathcal{L}}(D,\mu_1Q),C_{\mathcal{L}}(D,\mu_2Q)) \nonumber \\
&\geq & \min \bigg\{ \# \big( H^\ast (Q) \cap
\big( \cup_{s=1}^m(\gamma_{i_s}+H(Q))\big) \big)\nonumber \\
&&{\mbox{ \ \ \ \ }} \mid \gamma_{i_1},
\ldots , \gamma_{i_m} \in H^\ast(Q), \mu_2<\gamma_{i_1} < \cdots <
\gamma_{i_t} \leq \mu_1\bigg\} \label{eqdejlig}\\
&\geq &\min \bigg\{n -\gamma_{i_m} + \# \{\lambda \in \cup_{s=1}^{m-1}
(\gamma_{i_s}+H(Q)) \mid \lambda \notin \gamma_{i_m}+H(Q)\} \nonumber \\
&& {\mbox{ \ \ \ \ }} \mid \gamma_{i_1},
\ldots , \gamma_{i_m} \in H^\ast(Q), \mu_2<\gamma_{i_1} < \cdots <
\gamma_{i_t} \leq \mu_1\bigg\} \label{eqihopla2lig}\\
 & \geq &n-\mu_1+Z(H(Q),\mu,m), \label{eqthebsound}
\end{eqnarray}
where $\mu=\mu_1-\mu_2$.
\end{theorem}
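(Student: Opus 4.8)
The plan is to push both one-point codes into the semigroup language provided by the basis~(\ref{eqhoejsa}) and then feed this into Proposition~\ref{the1}, unwinding the three displayed inequalities (\ref{eqdejlig}), (\ref{eqihopla2lig}), (\ref{eqthebsound}) one layer at a time. First I would record the $\bar\rho$-data of the two codes. Since $H^\ast(Q)$ is defined precisely so that $C_{\mathcal{L}}(D,\mu Q)={\mbox{Span}}\{\vec{b}_i\mid\gamma_i\leq\mu\}$, it follows that with respect to the basis~(\ref{eqhoejsa}) one has $\bar\rho(C_{\mathcal{L}}(D,\mu_1Q))=\{i\mid\gamma_i\leq\mu_1\}$ and $\bar\rho(C_{\mathcal{L}}(D,\mu_2Q))=\{i\mid\gamma_i\leq\mu_2\}$, both being initial segments of $\mathcal{J}$ because $\gamma_1<\cdots<\gamma_n$. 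Hence the integer $u$ of Theorem~\ref{thethat} is the least index $i$ with $\gamma_i>\mu_2$, and a tuple $u\leq i_1<\cdots<i_m$ with $i_s\in\bar\rho(C_{\mathcal{L}}(D,\mu_1Q)\setminus\{\vec{0}\})$ is exactly a chain $\mu_2<\gamma_{i_1}<\cdots<\gamma_{i_m}\leq\mu_1$ in $H^\ast(Q)$; such a chain forces $m\leq\mu_1-\mu_2=\mu$, so $Z(H(Q),\mu,m)$ is well defined.

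For (\ref{eqdejlig}) and (\ref{eqihopla2lig}) I would take an arbitrary $m$-dimensional subspace $\Delta\subseteq C_{\mathcal{L}}(D,\mu_1Q)$ with $\Delta\cap C_{\mathcal{L}}(D,\mu_2Q)=\{\vec{0}\}$ and let $\gamma_{i_1}<\cdots<\gamma_{i_m}$ be the elements of $H^\ast(Q)$ attached to $\Delta$ by Proposition~\ref{the1}. The containment $\Delta\subseteq C_{\mathcal{L}}(D,\mu_1Q)$ gives $\gamma_{i_m}\leq\mu_1$, while $\Delta\cap C_{\mathcal{L}}(D,\mu_2Q)=\{\vec{0}\}$ forces every nonzero vector of $\Delta$ into $C_{\mathcal{L}}(D,\mu_1Q)\setminus C_{\mathcal{L}}(D,\mu_2Q)$, hence (by the observation underlying the proof of Theorem~\ref{thethat}) to have $\bar\rho\geq u$, i.e.\ $\gamma_{i_1}>\mu_2$. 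Proposition~\ref{the1} then bounds $\#{\mbox{Supp}}(\Delta)$ below by $\#\big(H^\ast(Q)\cap(\cup_{s=1}^m(\gamma_{i_s}+H(Q)))\big)$ and this in turn by $n-\gamma_{i_m}+\#\{\lambda\in\cup_{s=1}^{m-1}(\gamma_{i_s}+H(Q))\mid\lambda\notin\gamma_{i_m}+H(Q)\}$. Minimising over all admissible $\Delta$ — equivalently over all chains $\mu_2<\gamma_{i_1}<\cdots<\gamma_{i_m}\leq\mu_1$ in $H^\ast(Q)$ — and invoking Definition~\ref{defrelham} then yields (\ref{eqdejlig}) and (\ref{eqihopla2lig}).

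Finally, to pass from (\ref{eqihopla2lig}) to (\ref{eqthebsound}) I would fix such a chain, note $n-\gamma_{i_m}\geq n-\mu_1$, and put $i'_s=\gamma_{i_s}-\gamma_{i_m}$ for $s=1,\ldots,m-1$. The identity displayed in~(\ref{eqihopla1}) rewrites $\#\{\lambda\in\cup_{s=1}^{m-1}(\gamma_{i_s}+H(Q))\mid\lambda\notin\gamma_{i_m}+H(Q)\}$ as $\#\{\alpha\in\cup_{s=1}^{m-1}(i'_s+H(Q))\mid\alpha\notin H(Q)\}$, and one checks $i'_1<\cdots<i'_{m-1}\leq -1$ together with $i'_1\geq(\mu_2+1)-\mu_1=-\mu+1$, so that $(i'_1,\ldots,i'_{m-1})$ is one of the tuples over which the minimum in Definition~\ref{defZ} ranges; this cardinality is therefore at least $Z(H(Q),\mu,m)$ (the case $m=1$ being trivial, as both sides vanish). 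Taking the minimum over all chains gives (\ref{eqthebsound}). I expect the only slightly delicate point to be the bookkeeping in the middle step — matching the abstract constraint of Theorem~\ref{thethat} with the semigroup chain $\mu_2<\gamma_{i_1}<\cdots<\gamma_{i_m}\leq\mu_1$ — together with the routine reindexing $i'_s=\gamma_{i_s}-\gamma_{i_m}$ at the end; neither is deep, the whole argument being a careful unwinding of Proposition~\ref{the1}.
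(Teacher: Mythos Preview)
Your proposal is correct and follows essentially the same route as the paper's proof: you reduce to the setting of Corollary~\ref{corcor} via the initial-segment description of the two codes with respect to the basis~(\ref{eqhoejsa}), apply Proposition~\ref{the1} to obtain~(\ref{eqdejlig}) and~(\ref{eqihopla2lig}), and then use $\gamma_{i_m}\leq\mu_1$ together with the reindexing~(\ref{eqihopla1}) to reach~(\ref{eqthebsound}). The only remark is that your claim ``every nonzero vector of $\Delta$ has $\bar\rho\geq u$'' is justified precisely because $C_{\mathcal{L}}(D,\mu_2Q)$ is spanned by an initial segment $\{\vec{b}_1,\ldots,\vec{b}_{k_2}\}$ of the basis (so $\vec d\notin C_2$ forces $\bar\rho(\vec d)>k_2$), which is exactly the Corollary~\ref{corcor} situation the paper invokes.
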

\begin{proof}
Consider an $m$-dimensional vector space $D \subseteq
C_{\mathcal{L}}(D,\mu_1Q)$ with $D\, \cap \, C_{\mathcal{L}}(D,\mu_2Q)=\{
\vec{0} \}$. Let $\gamma_{i_1} < \cdots < \gamma_{i_m}$ be as
described in Theorem~\ref{the1}. By the definition of the codes we have $\gamma_{i_1}, \ldots ,
\gamma_{i_m} \in \{ \mu_2+1,\ldots , \mu_1\}$ (this is the situation of Corollary~\ref{corcor}). Consequently  (\ref{eqdejlig}) and
(\ref{eqihopla2lig}), respectively, follow from (\ref{eqdej}) and
(\ref{eqihopla2}), respectively. We have $-\mu_1 \leq
-\gamma_{i_m}$. Similarly, by~(\ref{eqihopla1}) $Z(H(Q),\mu,m)$ is
smaller than 
or equal to the last term
in~(\ref{eqihopla2}). These observations prove (\ref{eqthebsound}).
\end{proof}

Note that~(\ref{eqthebsound}) may be strictly smaller
than~(\ref{eqihopla2lig}).  Firstly, $\mu_1$ may not 
belong to $H^\ast
(Q)$. Secondly, when applying the function $Z(H(Q),\mu,m)$ we do not discard the
numbers in $\{\mu_2+1, \ldots , \mu_1-1\}$ that are gaps of $H(Q)$,
and least of all the numbers in the interval that are not present in
$H^\ast (Q)$. The connection to the usual Goppa bound for primary
codes is seen from the expression in~(\ref{eqthebsound}): letting $m=1$
we get by Definition~\ref{defZ} $Z(H(Q),\mu,m)=0$ and the 
formula simplifies to the well-known bound on the minimum distance
$d\big(C_{\mathcal{L}}(D,\mu_1Q)\big) \geq n-\mu_1$.\\

For duals of one-point algebraic geometric codes we have
a bound similar to~(\ref{eqdejlig}), but no bounds similar to (\ref{eqihopla2lig}) or (\ref{eqthebsound}).
\begin{theorem}\label{the2dual}
Let $\mu_1, \mu_2$ and $m$ be as in Theorem~\ref{the2}. We have
\begin{eqnarray}
&&M_m(C_{\mathcal{L}}^\perp(D,\mu_2Q),C_{\mathcal{L}}^\perp(D,\mu_1Q)) \nonumber \\
&\geq & \min \bigg\{ \# \big( H (Q) \cap
\big( \cup_{s=1}^m(\gamma_{i_s}-H(Q))\big) \big)\nonumber \\
&&{\mbox{ \ \ \ \ }} \mid \gamma_{i_1},
\ldots , \gamma_{i_m} \in H^\ast(Q), \mu_2<\gamma_{i_1} < \cdots <
\gamma_{i_m} \leq \mu_1\bigg\}. \label{eqdejligdual}
\end{eqnarray}
\end{theorem}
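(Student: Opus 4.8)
The plan is to follow the same route as the primary-code bound~(\ref{eqdejlig}) of Theorem~\ref{the2}, but through the dual machinery: Theorem~\ref{theny} in place of Theorem~\ref{thethat}/Corollary~\ref{corcor}, the sets $V_l$ in place of the $\Lambda_i$, and --- crucially --- a $V_l$-analogue of Lemma~\ref{proptraekker}. So the first step is to show that, relative to the one-point basis~(\ref{eqhoejsa}), for every $i\in\mathcal{J}$
\begin{equation*}
\{\, l\in\mathcal{J} \mid \gamma_i-\gamma_l\in H(Q)\,\}\subseteq V_i .
\end{equation*}
This is immediate from \cite[Pro.\ 27, Pro.\ 28]{AG}: if $\gamma_i-\gamma_l\in H(Q)$, then since $\gamma_l+(\gamma_i-\gamma_l)=\gamma_i\in H^\ast(Q)$, Proposition~28 forces $\gamma_i-\gamma_l\in H^\ast(Q)$, so $\gamma_i-\gamma_l=\gamma_{l'}$ for some $l'\in\mathcal{J}$; then $\gamma_l+\gamma_{l'}=\gamma_i$, so by Proposition~27 the pair $(l,l')$ is OWB with $\bar{\rho}(\vec{b}_l\ast\vec{b}_{l'})=i$, i.e.\ $l\in V_i$. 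The same Proposition~28 also gives, for any $\gamma\in H^\ast(Q)$, the set equality $H(Q)\cap(\gamma-H(Q))=H^\ast(Q)\cap(\gamma-H(Q))$ (if $a\in H(Q)$ and $\gamma-a\in H(Q)$, then $a\in H^\ast(Q)$ because $a+(\gamma-a)=\gamma\in H^\ast(Q)$), which is used below.

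Next I would specialize Theorem~\ref{theny} to $C_1=C_{\mathcal{L}}(D,\mu_1Q)$ and $C_2=C_{\mathcal{L}}(D,\mu_2Q)$, so that $M_m(C_{\mathcal{L}}^\perp(D,\mu_2Q),C_{\mathcal{L}}^\perp(D,\mu_1Q))=M_m(C_2^\perp,C_1^\perp)$ and the admissible range of $m$ agrees, as $\dim C_1-\dim C_2=\dim C_2^\perp-\dim C_1^\perp$. Since $\bar{\rho}(C_{\mathcal{L}}(D,\mu Q)\setminus\{\vec{0}\})=\{\,l\in\mathcal{J}\mid\gamma_l\le\mu\,\}$, the largest element $u$ of $\bar{\rho}(C_1\setminus\{\vec{0}\})$ is the largest index with $\gamma_l\le\mu_1$, while $\bar{\rho}(C_2)=\{\,l\mid\gamma_l\le\mu_2\,\}$. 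Hence the constraints $1\le i_1<\cdots<i_m\le u$ and $i_1,\dots,i_m\notin\bar{\rho}(C_2)$ of Theorem~\ref{theny} translate, via the order-preserving bijection $\mathcal{J}\to H^\ast(Q)$, $l\mapsto\gamma_l$, into: $\gamma_{i_1},\dots,\gamma_{i_m}\in H^\ast(Q)$ with $\mu_2<\gamma_{i_1}<\cdots<\gamma_{i_m}\le\mu_1$ --- exactly the constraint set appearing in~(\ref{eqdejligdual}). For a fixed admissible tuple, the first step gives $\cup_{s=1}^m V_{i_s}\supseteq\cup_{s=1}^m\{\,l\mid\gamma_{i_s}-\gamma_l\in H(Q)\,\}$, and applying the bijection $l\mapsto\gamma_l$ to the latter set (which preserves cardinality) turns it into $\cup_{s=1}^m\big(H^\ast(Q)\cap(\gamma_{i_s}-H(Q))\big)=H(Q)\cap\cup_{s=1}^m(\gamma_{i_s}-H(Q))$, using the set equality from the first step. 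Thus $\#\cup_{s=1}^m V_{i_s}\ge\#\big(H(Q)\cap\cup_{s=1}^m(\gamma_{i_s}-H(Q))\big)$, and taking the minimum over all admissible tuples on both sides yields~(\ref{eqdejligdual}).

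I do not expect any genuinely hard step: the whole argument reduces to Theorem~\ref{theny} together with two applications of \cite[Pro.\ 28]{AG}, the second of which is what replaces $H^\ast(Q)$ by $H(Q)$ inside the cardinality. The only point requiring care is the index/semigroup bookkeeping of the second paragraph, i.e.\ checking that the constraints of Theorem~\ref{theny} map faithfully onto $\mu_2<\gamma_{i_1}<\cdots<\gamma_{i_m}\le\mu_1$ with all $\gamma_{i_s}\in H^\ast(Q)$. The absence of counterparts to~(\ref{eqihopla2lig}) and~(\ref{eqthebsound}) is structural: $\gamma_{i_s}-H(Q)$ is a \emph{bounded} subset of $\bbZ$, so there is neither a term of the form $n-\gamma_{i_m}$ to split off nor a semigroup identity of the type $\#(\Gamma\setminus(\lambda+\Gamma))=\lambda$ to exploit, in agreement with the remark preceding the theorem.
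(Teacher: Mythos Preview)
Your proposal is correct and follows precisely the route the paper intends: the paper gives no explicit proof of Theorem~\ref{the2dual}, merely remarking that the dual bound is ``similar to~(\ref{eqdejlig})'' and relying on the dual machinery (Theorem~\ref{theny}, the sets $V_l$, and \cite[Pro.\ 27, 28]{AG}) set up earlier. Your write-up simply spells out this analogy in full, including the $V_l$-analogue of Lemma~\ref{proptraekker} and the observation $H(Q)\cap(\gamma-H(Q))=H^\ast(Q)\cap(\gamma-H(Q))$ for $\gamma\in H^\ast(Q)$, which is exactly what is needed.
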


\section{RGHWs of Hermitian codes}\label{secherm}\label{sec6}
In this section we apply the results of Section~\ref{sec5} 
 to the case of Hermitian codes \cite{tiersma,stichtenothhermitian}. Our main result is that~(\ref{eqthebsound}) is
 often tight. The Hermitian function
field over ${\mathbb{F}}_{q^2}$ ($q$ a prime power) is given by the
equation $x^{q+1}-y^q-y$ and it possesses exactly $q^3+1$
rational places which we denote 
$P_1, \ldots , P_{q^3},Q$ -- the last being the pole of $x$.
The Weierstrass semigroup of $Q$,
$H(Q)=\langle \rho(x)=q,\rho(y)=q+1\rangle$, has $g=q(q-1)/2$ gaps and conductor
$c=q(q-1)$. Let $D=P_1+\cdots +P_{q^3}$. In the following by a
Hermitian code we mean a code of the form
$C_{\mathcal{L}}(D,\mu Q)$. Clearly, this code is of length $n=q^3$. As
is well-known the dual of a Hermitian code is a Hermitian code. This
fact will be useful when in a later section we consider ramp
schemes based on Hermitian codes. We start our investigation with a
lemma that treats a slightly more general class of semigroups than the
semigroup $\langle q,q+1\rangle$ relevant to us.

\begin{lemma}\label{lem2}
Let $a$ be an integer, $a \geq 2$. Define $\Gamma=\langle a, a+1
\rangle$. For integers $m,\mu$ with $1 \leq m \leq  \mu \leq a+1$ it holds that  
\begin{eqnarray}
Z(\Gamma,\mu,m)=\sum_{s=0}^{m-2}(a-s)=a(m-1)-(m-2)(m-1)/2.\label{eq1}
\end{eqnarray}
\end{lemma}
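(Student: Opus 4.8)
The plan is to compute $Z(\Gamma,\mu,m)$ directly from Definition \ref{defZ} by identifying the optimal choice of the negative indices $i_1 < \cdots < i_{m-1}$. First I would recall that for $\Gamma = \langle a, a+1\rangle$, the gaps of $\Gamma$ are exactly those positive integers that cannot be written as $\alpha a + \beta(a+1)$ with $\alpha,\beta \geq 0$; a convenient explicit description is that the nongaps up to $a^2$ organize into the ``staircase'' pattern $\{ka, ka+1, \dots, ka+k\}$ for $k = 0, 1, \dots$, so the small elements of $\Gamma$ are $0$; $a, a+1$; $2a, 2a+1, 2a+2$; and so on. The key quantity to control is, for a set of negative shifts $i_1 < \cdots < i_{m-1}$ with $-\mu+1 \le i_1$, the size of $\{\alpha \in \cup_s (i_s + \Gamma) : \alpha \notin \Gamma\}$, i.e. the set of values that become reachable after allowing these negative shifts but that were not already in $\Gamma$.

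The main step is to argue that the minimum is attained at the choice $i_s = s - m$ for $s = 1, \dots, m-1$, that is, the $m-1$ consecutive integers $-1, -2, \dots, -(m-1)$ immediately below $0$ (this is legal since $m-1 \le \mu - 1$ forces these all to be $\ge -\mu+1$). Intuitively, shifts close to $0$ move the semigroup the least and hence create the fewest new elements; shifts further down reach more deeply into the gap set. I would make this rigorous by showing that replacing any $i_s$ by a larger negative integer not already among the $i_t$'s cannot increase the target set — a monotonicity/exchange argument — so an optimal configuration can be taken to be the top $m-1$ negative integers. Then for this specific choice I must count $\#\{\alpha \in \cup_{j=1}^{m-1}(\Gamma - j) : \alpha \notin \Gamma\}$. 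Here one uses the staircase description of $\Gamma$: the set $\cup_{j=1}^{m-1}(\Gamma - j)$ fills in, just below each ``step'' of the staircase, a block of new integers, and the number of genuinely new ones (not already in $\Gamma$) contributed at the $s$-th level is $a - s$ for $s = 0, 1, \dots, m-2$, as long as $m \le a+1$ so that the steps are still ``growing'' and the blocks created by consecutive shifts do not yet overlap with $\Gamma$ in a way that reduces the count. Summing $\sum_{s=0}^{m-2}(a-s)$ gives the closed form $a(m-1) - (m-2)(m-1)/2$, which is the claimed equality.

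The hard part will be the optimality argument: Definition \ref{defZ} takes a minimum over all admissible tuples $(i_1, \dots, i_{m-1})$, and one must rule out that some cleverly spread-out choice of shifts creates fewer new non-semigroup elements than the consecutive block $-1, \dots, -(m-1)$. I expect this to hinge on a careful exchange lemma: if $i_{m-1} < -1$, then shifting it up to $-1$ (or to the largest available negative integer) does not enlarge $\cup_s(i_s+\Gamma) \setminus \Gamma$, because $\Gamma$ is closed under addition and the ``new gaps'' revealed by a shift are nested appropriately. The bound $\mu \le a+1$ is what guarantees we never need shifts more negative than $-(a)$, keeping us inside the regime where the per-level count $a-s$ is valid; I would flag precisely where that hypothesis is used, since outside this range the staircase blocks start to interact and the formula changes. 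The remaining counting is then a routine induction on $m$ or a direct tally over the staircase levels.
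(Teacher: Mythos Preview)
Your overall plan---identify the consecutive shifts $-1,\dots,-(m-1)$ as optimal, then count---matches the paper's in the second half (the count), but the optimality argument you propose is wrong. The exchange claim ``replacing any $i_s$ by a larger negative integer cannot increase the target set'' is false already for $m=2$. Take $a=4$, $\Gamma=\langle 4,5\rangle$, $\mu=5$. For $i_1=-4$ one gets
\[
(-4+\Gamma)\setminus\Gamma=\{-4,1,6,11\},
\]
of size $4$, while for $i_1=-3$
\[
(-3+\Gamma)\setminus\Gamma=\{-3,1,2,6,7,11\},
\]
of size $6$. So pushing the shift from $-4$ up to $-3$ strictly \emph{increases} the count (and certainly does not shrink the set). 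The phenomenon is that $-4\in\Gamma-\Gamma$ (indeed $-4=0-4$ and also $-4+a\in\Gamma$), so the shift by $-4$ lands many elements back inside $\Gamma$; a shift like $-3$ does not enjoy this. Any local monotonicity/exchange argument will run into this.

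The paper avoids this entirely by proving a uniform lower bound for \emph{every} admissible tuple $(i_1,\dots,i_{m-1})$, not by reducing to a canonical one. It partitions the complement of $\Gamma$ into the maximal gap intervals (``deserts'') $G_1,\dots,G_{a-1}$ with $\#G_v=a-v$, and shows that for any choice of shifts in $\{-a,\dots,-1\}$ one has
\[
\#\big(G_v\cap\textstyle\bigcup_s(i_s+\Gamma)\big)\ \ge\ \min\{\#G_v,\,m-1\},
\]
together with exactly $m-1$ negative integers in the union. Summing these lower bounds over all $v$ and adding the negative contribution yields $\sum_{s=0}^{m-2}(a-s)$; equality is then checked at $i_s=-s$. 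To repair your argument you would need either this per-desert inequality or some genuinely global argument---the local exchange step cannot carry the proof.
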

\begin{proof}
Recall that a positive integer $\lambda$ is called a gap of $\Gamma$
if $\lambda \notin \Gamma$. All other non-negative integers are
called non-gaps. For the given semigroup $\Gamma$ the set of
non-negative integers consists of one non-gap followed by $a-1$ gaps,
then two non-gaps followed by $a-2$ gaps and so on up to $a-1$
non-gaps followed by $a-(a-1)=1$ gap. All the following numbers are
non-gaps. We denote the above maximal sequences of consecutive gaps
$G_1, \ldots , G_{a-1}$ with $\# G_v=a-v$, $v=1, \ldots ,a-1$ (such sequences are called
deserts in~\cite[Ex.\ 3]{MR3015351}).\\
First assume $1\leq m \leq \mu \leq a+1$. Let $-\mu \leq i_1 < \ldots
<i_{m-1} \leq -1$. We have 
\begin{eqnarray}
\#G_v \cap
\big(\cup_{s=1}^{m-1}(i_s+\Gamma)\big) \geq \min \{ \# G_v,m-1\}
\nonumber 
\end{eqnarray}
with equality when $i_{m-1}=-1, i_{m-2}=-2,$ $
\ldots, i_1=-(m-1)$. Summing up the contribution from all $G_v$ accounts for
$\sum_{s=1}^{m-2}(a-s)$. The term in~(\ref{eq1}) corresponding to
$s=0$, namely $a$, comes from considering the number
of negative integers in $\sum_{s=1}^{m-1}(i_s+\Gamma)$. Thus we have established~(\ref{eq1}).
\end{proof}
Recall from Theorem~\ref{the2} that we have three bounds on
the RGHW of
which~(\ref{eqthebsound}) is the weakest. Using Lemma~\ref{lem2}, for Hermitian codes
of codimension at most $q+1$, (\ref{eqthebsound}) translates
into the below closed formula expression~(\ref{eq3}). Surprisingly,
this expression is often equal to the true value of the RGHW.
\begin{theorem}\label{theresult}
Consider the Hermitian curve $x^{q+1}-y^q-y$ over
${\mathbb{F}}_{q^2}$. Let $P_1, \ldots , P_{n=q^3}$, and $Q$ be the rational
places and $D=P_1+\cdots +P_{n}$. Let $\mu_1, \mu_2$ be non-negative integers
with $1\leq \mu_1-\mu_2 \leq q+1$. For $1 \leq m \leq \dim
(C_{\mathcal{L}}(D,\mu_1Q))-\dim (C_{\mathcal{L}}(D,\mu_2Q))$ we have
\begin{eqnarray}
M_m(C_{\mathcal{L}}(D,\mu_1Q),C_{\mathcal{L}}(D,\mu_2Q)) &\geq&
n-\mu_1 + \sum_{s=0}^{m-2} (q-s) \label{eq3}\\
&=&n-\mu_1+q(m-1)-(m-2)(m-1)/2. \nonumber
\end{eqnarray}
If
\begin{eqnarray}
c -1 \leq \mu_2  {\mbox{ and }} \mu_1 < n-c.\label{eq4}
\end{eqnarray}
(recall that $c=q(q-1)$) then we have 
$\dim(C_{\mathcal{L}}(D,\mu_1Q))-\dim(C_{\mathcal{L}}(D,\mu_2Q))=\mu_1-\mu_2$ and
equality in~(\ref{eq3}).
\end{theorem}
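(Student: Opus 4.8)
The plan is to establish Theorem~\ref{theresult} in two parts. \textbf{Part one: the inequality~(\ref{eq3}).} This is essentially immediate from the machinery already built. By Theorem~\ref{the2}, inequality~(\ref{eqthebsound}), we have
\[
M_m(C_{\mathcal{L}}(D,\mu_1Q),C_{\mathcal{L}}(D,\mu_2Q)) \geq n-\mu_1 + Z(H(Q),\mu,m),
\]
where $\mu = \mu_1 - \mu_2$. Since $H(Q) = \langle q, q+1\rangle$ is exactly the semigroup $\Gamma = \langle a, a+1\rangle$ of Lemma~\ref{lem2} with $a = q$, and since the hypothesis $1 \leq \mu_1 - \mu_2 \leq q+1$ says precisely that $1 \leq \mu \leq a+1$ (and $1 \leq m$ because $m$ ranges up to the codimension, which is at most $\mu \leq a+1$), Lemma~\ref{lem2} applies and gives $Z(H(Q),\mu,m) = \sum_{s=0}^{m-2}(q-s) = q(m-1) - (m-2)(m-1)/2$. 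Substituting yields~(\ref{eq3}) and the displayed closed form. I should double-check the boundary condition $m \le \mu$: the codimension $\dim(C_{\mathcal{L}}(D,\mu_1Q)) - \dim(C_{\mathcal{L}}(D,\mu_2Q))$ equals $\#(H^\ast(Q) \cap \{\mu_2+1,\ldots,\mu_1\})$, which is at most $\mu_1 - \mu_2 = \mu \le q+1 = a+1$, so Lemma~\ref{lem2}'s range $1 \le m \le \mu \le a+1$ is respected.

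\textbf{Part two: the dimension count and equality under~(\ref{eq4}).} First, when $c-1 \leq \mu_2$ and $\mu_1 < n - c$, both $\mu_1$ and $\mu_2$ lie in the range $[c-1, n-c-1+1]$, i.e. between the conductor $c$ and $n - c$. In this range every integer $\mu$ with $c - 1 \le \mu \le n$ satisfies $\mu \in H(Q)$ (since $\mu \ge c$) and moreover $\mu \in H^\ast(Q)$: indeed $H^\ast(Q) \supseteq \{\mu \in H(Q) : \mu \le n - c\} \cup \ldots$, and the standard fact (e.g. from the theory of one-point codes, or by counting with the Riemann--Roch theorem) is that for $c \le \mu \le n$, the code dimension increases by exactly $1$ at each step, so $\{c, c+1, \ldots, n\} \subseteq H^\ast(Q)$ while $n+1, \ldots$ are not in $H^\ast(Q)$. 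Hence $\dim(C_{\mathcal{L}}(D,\mu_1Q)) - \dim(C_{\mathcal{L}}(D,\mu_2Q)) = \mu_1 - \mu_2$. For the equality in~(\ref{eq3}), I would construct an explicit $m$-dimensional subspace $D$ of $C_{\mathcal{L}}(D,\mu_1Q)$ with $D \cap C_{\mathcal{L}}(D,\mu_2Q) = \{\vec 0\}$ achieving support size exactly $n - \mu_1 + \sum_{s=0}^{m-2}(q-s)$, thereby matching the lower bound. The natural candidate is $D = \mathrm{Span}\{\ev(f_{\gamma_{i_1}}), \ldots, \ev(f_{\gamma_{i_m}})\}$ where $\gamma_{i_m} = \mu_1$ and $\gamma_{i_1}, \ldots, \gamma_{i_{m-1}}$ are the $m-1$ largest elements of $H^\ast(Q)$ strictly below $\mu_1$ but still above $\mu_2$ — that is, $\gamma_{i_s} = \mu_1 - (m - s)$ for $s = 1, \ldots, m-1$ (these are consecutive integers just below $\mu_1$, which lie in $H^\ast(Q)$ by the range hypothesis). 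One then computes $\#\mathrm{Supp}(D)$ directly: it equals $\dim(D \ast {\mathbb{F}}_q^n)$, and because the relevant products $\ev(f_{\gamma_{i_s}}) \ast \vec b_j$ span exactly the expected coordinate subspace, the support size is governed by $\#\big(H^\ast(Q) \cap \bigcup_{s=1}^m (\gamma_{i_s} + H(Q))\big)$, which under condition~(\ref{eq4}) (so that $H^\ast(Q)$ looks like a full interval near $\mu_1$) collapses to the closed-form value; alternatively, one shows the Feng-Rao bound~(\ref{eqdejlig}) is met with equality for this choice and that~(\ref{eqdejlig}) itself equals~(\ref{eq3}) here.

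The main obstacle is the equality half of Part two: proving that the lower bound~(\ref{eqthebsound}) is actually attained. The inequality direction is automatic, but sharpness requires either (a) exhibiting a subspace $D$ and carefully computing $\#\mathrm{Supp}(D)$ — which means understanding precisely which coordinates vanish on all of $D$, a computation with the Hermitian evaluation map — or (b) invoking a known sharpness result for the Goppa/Feng-Rao bound in the range $c \le \mu_1 \le n - c$ (where the minimum distance of $C_{\mathcal{L}}(D, \mu_1 Q)$ is exactly $n - \mu_1$, a classical fact) and bootstrapping from the $m=1$ case to general $m$ by a dimension/projection argument. I expect the cleanest route is (b): use that for $\mu$ in the stated range the one-point Hermitian code is "as good as a Reed--Solomon-like code" in the sense that $H^\ast(Q)$ is an interval there, reduce the combinatorial quantity $\#\big(H^\ast(Q) \cap \bigcup_{s=1}^m(\gamma_{i_s}+H(Q))\big)$ to a count over that interval, and check it agrees with $n - \mu_1 + \sum_{s=0}^{m-2}(q-s)$ for the greedy choice of the $\gamma_{i_s}$'s — the geometry of the "deserts" $G_1, \ldots, G_{q-1}$ from Lemma~\ref{lem2} is what makes this count come out to the stated sum, and the same desert structure is what forces the bound to be tight rather than slack.
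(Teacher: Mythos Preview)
Your Part one is correct and matches the paper exactly: the inequality~(\ref{eq3}) is obtained by plugging Lemma~\ref{lem2} (with $a=q$) into~(\ref{eqthebsound}) of Theorem~\ref{the2}, and your check that $m\le\mu\le q+1$ is fine.

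Part two, however, has a genuine gap in the equality argument. To show equality in~(\ref{eq3}) you must produce an $m$-dimensional subspace $D\subseteq C_{\mathcal L}(D,\mu_1Q)$ with $D\cap C_{\mathcal L}(D,\mu_2Q)=\{\vec 0\}$ whose support is \emph{at most} $n-\mu_1+\sum_{s=0}^{m-2}(q-s)$. Your option~(a) takes $D=\mathrm{Span}\{\ev(f_{\gamma_{i_s}})\}$ for the fixed basis elements with $\gamma_{i_s}=\mu_1-(m-s)$ and then asserts that $\#\mathrm{Supp}(D)$ is ``governed by'' $\#\big(H^\ast(Q)\cap\bigcup_s(\gamma_{i_s}+H(Q))\big)$. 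But that quantity is the Feng--Rao \emph{lower} bound on $\#\mathrm{Supp}(D)$ (Proposition~\ref{the1}), not an upper bound; knowing the lower bound equals the target tells you nothing about whether this particular $D$ actually has small support, and for a generic basis (e.g.\ monomials $x^ay^b$) it typically will not. Option~(b) is circular for the same reason: showing that the combinatorial minimum in~(\ref{eqdejlig}) equals the closed form does not prove any subspace realises it, and there is no valid bootstrap from the $m=1$ minimum-distance case to $m>1$.

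What the paper does instead (Lemma~\ref{lemhejsa}) is a concrete construction that your proposal lacks. Writing $\mu_1=iq+j(q+1)$ with $0\le j<q$, one builds polynomials $F_0,\ldots,F_{m-1}$ as products of linear factors $(X-\beta_s)$, $(X-\gamma_s)$, $(Y-\alpha_s)$, where the $\alpha_s$ are the $q$ elements of $\mathbb{F}_{q^2}$ with trace~$1$, the $\gamma_s$ are the $q+1$ elements with norm~$1$, and the $\beta_s$ have norm $\neq 1$. The factors are arranged so that $F_i$ has pole order exactly $\mu_1-i$ at $Q$ (forcing the span to meet $C_{\mathcal L}(D,\mu_2Q)$ trivially), and an induction on $m$ using the norm/trace arithmetic of the Hermitian curve counts their common affine zeros as exactly $\mu_1-\sum_{s=0}^{m-2}(q-s)$. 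The support of the span of their evaluations is then $n$ minus that number. This explicit zero-counting on the curve is the missing ingredient; the desert structure of $H(Q)$ explains why the bound has the shape it does, but does not by itself furnish the matching subspace.
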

\begin{proof}
Equation~(\ref{eq3}) is a consequence of the last part of 
Theorem~\ref{the2} and
the first part of Lemma~\ref{lem2}. The result concerning the dimensions
is well-known. That equality holds in~(\ref{eq3}) under 
condition~(\ref{eq4}) follows from Lemma~\ref{lemhejsa} below.
\end{proof}
\begin{lemma}\label{lemhejsa}
Let $\mu_1$ and $m$ be positive integers with $m \leq q+1$, $\mu_1 <
 n-c$ and $c-1<\mu_1-(m-1)$. Then there exist $m$ functions $f_0,
\ldots , f_{m-1}$ such that
\begin{itemize}
\item $f_i \in {\mathcal{L}}((\mu_1-i)Q)
  \backslash{\mathcal{L}}((\mu_1-(i+1))Q)$, $i=0, \ldots , m-1$.
\item The number of common zeros of $f_0, \ldots , f_{m-1}$ is exactly
  $\mu_1-\sum_{i=0}^{m-2}(q-i)$. 
\end{itemize}
\end{lemma}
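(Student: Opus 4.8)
The point of the lemma is to certify that~(\ref{eq3}) is attained. Granting the two displayed properties, put $N=\mu_1-\sum_{s=0}^{m-2}(q-s)$ and $D'=\{\,\ev(a_0f_0+\cdots+a_{m-1}f_{m-1})\mid a_i\in\mathbb{F}_{q^2}\,\}$. Since $\rho(f_i)=\mu_1-i$ the functions $f_0,\ldots,f_{m-1}$ have pairwise distinct pole orders, and since $\mu_1<n$ the map $\ev$ is injective on $\mathcal{L}(\mu_1Q)$; hence $D'$ is an $m$-dimensional subspace of $C_{\mathcal{L}}(D,\mu_1Q)$, every nonzero element of $D'$ is $\ev(g)$ with $\mu_1-m+1\le\rho(g)\le\mu_1$, and $\#\Supp(D')=n-\#\{\,j\mid f_0(P_j)=\cdots=f_{m-1}(P_j)=0\,\}=n-N$. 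By Definition~\ref{defrelham} this gives $M_m(C_{\mathcal{L}}(D,\mu_1Q),C_{\mathcal{L}}(D,\mu_2Q))\le n-N$, matching~(\ref{eq3}); and $\rho(g)\ge\mu_1-m+1>\mu_2$ (using $m\le\mu_1-\mu_2$, which holds in the situation of Theorem~\ref{theresult} under~(\ref{eq4})) shows $D'\cap C_{\mathcal{L}}(D,\mu_2Q)=\{\vec 0\}$. So the whole task is to produce the $f_i$, i.e.\ $m$ functions of consecutive pole orders $\mu_1,\mu_1-1,\ldots,\mu_1-m+1$ having exactly $N$ common zeros among $P_1,\ldots,P_n$; note these pole orders are all non-gaps because $c-1<\mu_1-(m-1)$.

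The construction uses the two coordinate functions on $x^{q+1}-y^q-y$. For $\alpha\in\mathbb{F}_{q^2}$, $x-\alpha$ has pole divisor $qQ$ and its zero divisor is the reduced fibre of $q$ rational points above $x=\alpha$; fibres over distinct $\alpha$ are disjoint. For $\beta\in\mathbb{F}_{q^2}$, $y-\beta$ has pole divisor $(q+1)Q$ and, according as the trace of $\beta$ from $\mathbb{F}_{q^2}$ to $\mathbb{F}_q$ is nonzero or zero, its zero divisor is a reduced sum of $q+1$ rational points (all with second coordinate $\beta$) or $q+1$ times a single point; moreover a fibre above $x=\alpha$ and the zero locus of such a $y-\beta$ meet in exactly one point when $\alpha^{q+1}$ equals the trace of $\beta$, and are disjoint otherwise. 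Multiplying a suitable collection of factors $x-\alpha$, $y-\beta$ (by a monomial $x^ay^b$, $0\le b\le q-1$, if one insists on hitting a prescribed non-gap pole order exactly) yields a function whose zero divisor on $P_1+\cdots+P_n$ is a completely understood union of fibres, bands and isolated intersection points. The hypothesis $\mu_1<n-c$ guarantees that the number of factors involved in the $f_i$ stays well below $n/q=q^2$, leaving enough room in $\mathbb{F}_{q^2}$ to realise whatever pattern of coincidences is required.

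The plan is then as follows. First dispose of $m=1$ by a totally split $f_0$: writing $\mu_1=aq+b(q+1)$ with $0\le b\le q-1$ (so $a\ge b$ since $\mu_1\ge c$) and $f_0=\prod_{k=1}^a(x-\alpha_k)\prod_{l=1}^b(y-\beta_l)$ with all zero loci disjoint and the $\alpha_k$ nonzero, $f_0$ has exactly $\mu_1=N$ rational zeros. For general $m$, take each $f_i$ totally split of pole order $\mu_1-i$ (read off its monomial shape from $\mu_1-i$ and convert to a fibre/band pattern) and choose its fibres, bands and fibre–band incidences so that the set of points lying on all $m$ of the $f_i$ has size exactly $N$. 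For small $m$ this amounts to nesting the families of fibres; for $m$ near $q+1$ one must additionally use the isolated intersection points of fibres with bands to make up the difference — already at $m=2$ one needs exactly one such extra incidence, and the number needed grows with $m$.

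The step I expect to be the real work is this final bookkeeping: proving that for every admissible pair $(\mu_1,m)$ the chosen family has common-zero count \emph{exactly} $N$ — not larger (which~(\ref{eq3}) forbids, but which the construction must avoid anyway) and not smaller. One has to run the count case by case in the residue of $\mu_1$ modulo $q$, which controls the monomial shapes and hence how the consecutive pole orders $\mu_1-i$ wrap through the blocks of non-gaps of $\langle q,q+1\rangle$, and in how close $m$ is to $q+1$, using the description of the deserts $G_1,\ldots,G_{q-1}$ from Lemma~\ref{lem2}; the extreme case $\mu_1=q^2$, $m=q+1$ (where $N=g$ and no common factor of pole order $N$ exists, since its complementary pole orders would include gaps of $\langle q,q+1\rangle$) is the tightest and is exactly where the fibre–band incidences become indispensable.
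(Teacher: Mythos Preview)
Your strategy matches the paper's: realise each $f_i$ as a product of linear factors $x-\alpha$ and $y-\beta$ and control the common zeros through the fibre/band incidence structure of the Hermitian curve. The opening paragraph (deducing tightness of~(\ref{eq3}) from the lemma) and the geometric toolkit in the second paragraph are both correct.

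The gap is that you never actually write down the $f_i$. Saying ``choose the fibres, bands and fibre--band incidences so that the common zero count is exactly $N$'' is a restatement of the conclusion, not a construction, and your final paragraph explicitly defers the ``real work''. The paper does supply the construction: writing $\mu_1=iq+j(q+1)$ with $0\le j<q$, it gives explicit polynomials in two regimes split at $m=j+1$ (not at ``$m$ small'' versus ``$m$ near $q+1$''). For $m\le j+1$ every $F_t$ shares the block $\prod_{s=1}^{i}(x-\beta_s)$ with ${\rm{N}}(\beta_s)\ne 1$, and $F_t$ trades $t$ of the factors $y-\alpha_s$ (${\rm{Tr}}(\alpha_s)=1$) for factors $x-\gamma_s$ (${\rm{N}}(\gamma_s)=1$); for $m>j+1$ the common $x$-block shrinks to $\prod_{s=1}^{i-q+j}(x-\beta_s)$ and the pattern wraps around. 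It is precisely this case split---forced by where the monomial shape associated with $\mu_1-t$ jumps---that makes the count come out to $N$ in an inductive verification, and it is the piece your outline does not supply. Your observation that one extra incidence appears already at $m=2$ is correct (it is the point $(\gamma_1,\alpha_j)$ in the paper's notation), but ``the number needed grows with $m$'' is exactly where all the remaining content lies.
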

\begin{proof}
As is well-known $\cup_{\mu=0}^\infty {\mathcal{L}}(\mu Q)$ is isomorphic to ${\mathbb{F}}_{q^2}[X,Y]/I$, where $I= \langle X^{q+1}-Y^q-Y
\rangle$. The isomorphism is given by $\varphi(x)=X+I$ and $\varphi(y)=Y+I$. We call $X^{q+1}-Y^q-Y={\mbox{N}}(X)-{\mbox{Tr}}(Y)$ the
Hermitian polynomial -- ${\mbox{N}}$ being the norm and ${\mbox{Tr}}$
the trace corresponding to the field extension
${\mathbb{F}}_{q^2}/{\mathbb{F}}_q$. In this description the rational places $P_1, \ldots ,
P_{q^3}$ correspond to the affine points of the Hermitian
polynomial. We remind the reader of the following few facts
which play a crucial role in the below induction proofs: 
\begin{itemize}
\item For any $\delta \in
{\mathbb{F}}_{q^2}$ we have ${\mbox{N}}(\delta), {\mbox{Tr}}(\delta)
\in {\mathbb{F}}_q$.
\item For every $\epsilon \in {\mathbb{F}}_q$ there
exists exactly $q$ different $\delta$ such that
${\mbox{Tr}}(\delta)=\epsilon$. 
\item There exist exactly $q+1$ different
$\delta$ such that ${\mbox{N}}(\delta)=1$.
\end{itemize} 
We start by fixing some notation. Let $\{\alpha_1, \ldots ,
\alpha_q\}$ be the elements in ${\mathbb{F}}_{q^2}$ that map to $1$
under Tr. Let $\{\beta_1, \ldots ,\beta_{q^2-(q+1)}\}$ be the elements
that do not map to $1$ under N and $\{\gamma_1, \ldots ,
\gamma_{q+1}\}$ the elements that do.\\
Write $\mu_1=iq+j(q+1)$ with $0 \leq j <q$. First assume $1 \leq m \leq
j+1$ and that $i < q^2-q$. By induction on $m$ (in this interval) one
can show that the set $\{F_0, F_1, \ldots , F_{m-1}\}$ where
\begin{eqnarray}
F_0&=&\big(\prod_{s=1}^i(X-\beta_s)\big)\big(\prod_{s=1}^j(Y-\alpha_s)\big),\label{hejeq1} \\
F_1&=&\big(\prod_{s=1}^i(X-\beta_s)\big)(X-\gamma_1)\big(\prod_{s=1}^{j-1}(Y-\alpha_s)\big),
  \ldots , \label{hejeq2} \\
F_{m-1}&=&\big(\prod_{s=1}^i(X-\beta_s)\big)\big(\prod_{s=1}^{m-1}(X-\gamma_s)\big)\big(\prod_{s=1}^{j-m+1}(Y-\alpha_s)\big), \label{hejeq3}
\end{eqnarray}
has exactly $iq+j(q+1)-\sum_{s=0}^{m-2}(q-s)$ zeros in common with
the Hermitian polynomial $X^{q+1}-Y^q-Y$ (we leave the technical
details for the reader). \\
Finally, assume $j+1 \leq m \leq j+q$. By induction on $m$ (in this
interval) one can show that the set $\{F_0,F_1, \ldots , F_{m-1}\}$
where  
\begin{eqnarray}
F_0&=&\big(\prod_{s=1}^{i-q+j}(X-\beta_s)\big)\big(\prod_{s=1}^{q-j}(X-\gamma_s)\big)\big(\prod_{s=1}^j(Y-\alpha_s)\big),\label{snabel1}
  \\
F_1&=&\big(\prod_{s=1}^{i-q+j}(X-\beta_s)\big)\big(\prod_{s=1}^{q-j+1}(X-\gamma_s)\big)\big(\prod_{s=1}^{j-1}(Y-\alpha_s)\big),\ldots
  ,\label{snabel2} \\
F_j&=&\big(\prod_{s=1}^{i-q+j}(X-\beta_s)\big)\big(\prod_{s=1}^{q}(X-\gamma_s)\big)
  ,\label{snabel3} \\
F_{j+1}&=&\big(\prod_{s=1}^{i-q+j}(X-\beta_s)\big)\big(\prod_{s=1}^{q-1}(Y-\alpha_s)\big)
  ,\nonumber \\
F_{j+2}&=&\big(\prod_{s=1}^{i-q+j}(X-\beta_s)\big)\big(\prod_{s=1}^{q-2}(Y-\alpha_s)\big)(X-\gamma_1),
  \ldots , \nonumber  \\
F_{m-1}&=&\big(\prod_{s=1}^{i-q+j}(X-\beta_s)\big)\big(\prod_{s=1}^{q-m+j+1}(Y-\alpha_s)\big)\big(\prod_{s=1}^{m-j-2}(X-\gamma_s)\big), \nonumber 
\end{eqnarray}
has exactly $iq+j(q+1)-\sum_{s=0}^m(q-s)$ zeros in common with
the Hermitian polynomial $X^{q+1}-Y^q-Y$ (again we leave the technical
details for the reader). For simplicity we covered
the case $m=j+1$ and $i < q^2-q$ in both induction proofs. Observe that the basis step
$m=j+1$ of the last induction proof corresponds to the terms
in~(\ref{snabel1}), (\ref{snabel2}), (\ref{snabel3}) which are
different from (\ref{hejeq1}), (\ref{hejeq2}), (\ref{hejeq3}) with $m=j+1$.
\end{proof}

For $1 \leq m \leq \mu_1-\mu_2\leq q+1$ but with $\mu_1$ and $\mu_2$ not satisfying the
condition in~(\ref{eq4}) we can often derive much better estimates
than~(\ref{eq3}).\\
For $\mu_2<c-1$ it may happen that not all of the numbers $\mu_1,
\mu_1-1, \ldots , \mu_1-(m-1)$ belong to $H(Q)$, and so the worst
case in the proof of Theorem~\ref{the2} may not be realized. Hence, we should rather apply~(\ref{eqihopla2lig}) or (\ref{eqdejlig})
(which in this situation are equivalent).\\
For $n-c \leq \mu_1$ it may happen that $H^\ast(Q) \backslash (\mu_1+H(Q))$ is
strictly smaller than $H(Q)\backslash(\mu_1+H(Q))$ (this will happen
if $\mu_1=iq+j(q+1)$, with $q^2-q\leq i < q^2$ and $0 < j < q$). In
such a case
$\# \big( H^\ast (Q) \cap (\mu_1+H(Q))\big)$ will be strictly larger
than $n-\mu_1$. Moreover, all the numbers $\mu_1, \mu_1-1, \ldots , \mu_1-(m-1)$ need
not belong to $H^\ast (Q)$ (this may happen if $\mu_1 \geq n$) and again the worst case considered in the
proof of Theorem~\ref{the2} may not be realizable. In this situation 
we should rather apply~(\ref{eqdejlig}).\\

We illustrate our observations with three examples. The first two are
concerned with $\mu_2<c-1$ and the last with $n-c \leq \mu_1$. 

\begin{example}\label{exx1}
In this example we consider codes over
${\mathbb{F}}_{q^2}={\mathbb{F}}_{16}$. Hence, $q=4$, $H(Q)=\langle
4,5\rangle$ and $n=64$. The first numbers of $H^\ast(Q)$ (and $H(Q)$) are
$0,4,5,8,9,10,12$. Hence, $dim C_{\mathcal{L}}(D,8Q)=4$, $dim
C_{\mathcal{L}}(D,12Q)=7$. Theorem~\ref{theresult} tells us that 
$M_m(C_{\mathcal{L}}(D,12Q),C_{\mathcal{L}}(D,8Q))$ is at
least $52$, $56$ and $59$, for $m$ equal to $1$, $2$ and $3$,
respectively. Using~(\ref{eqihopla2lig}) we now show that for $m=2$
and $m=3$ the true values are at least $58$ and $60$, respectively. We
first concentrate on $m=2$. Using the notation from
Proposition~\ref{the1} we must investigate all $\gamma_{i_1},
\gamma_{i_2} \in \{9,10,12\}$ with $\gamma_{i_1}<\gamma_{i_2}$, 
We have three different choices of
$(\gamma_{i_1},\gamma_{i_2})$ to consider, namely $(10,12)$, $(9,12)$ and
$(9,10)$. We first observe that
\begin{eqnarray}
12+H(Q)&=&\{12,16,17,20,21,22,24,\ldots\} \nonumber \\
10+H(Q)&=&\{10,14,15,18,19,20,22,23,24,\ldots \} \nonumber \\
9+H(Q)&=&\{9,13,14,17,18,19,21,22,23,24, \ldots \}. \nonumber
\end{eqnarray}
Note that if $\alpha \in H(Q) \backslash (\lambda+H(Q))$ for $\lambda
\in \{9,10,12\}$ then also $\alpha \in H^\ast (Q)$.\\

\noindent {\underline{$(\gamma_{i_1},\gamma_{i_2})=(10,12)$:}} 
We have 
\begin{eqnarray}
&\# (H^\ast(Q) \cap (12+H(Q))=n-12=52, \label{eqinox1}\\
&\# ((10+H(Q))\backslash (12+H(Q))=6.\nonumber
\end{eqnarray}
Hence, we get the value $52+6=58$.\\

\noindent {\underline{$(\gamma_{i_1},\gamma_{i_2})=(9,12)$:}} 
Combining (\ref{eqinox1}) with
$$\#((9+H(Q))\backslash(12+H(Q))=6$$
again give us the value $52+6=58$.\\

\noindent {\underline{$(\gamma_{i_1},\gamma_{i_2})=(9,10)$:}} We have
\begin{eqnarray}
&\# (H^\ast(Q) \cap (10+H(Q))=n-10=54, \nonumber\\
&\# ((9+H(Q))\backslash (10+H(Q))=4\nonumber
\end{eqnarray}
 producing the value $54+4=58$.\\

The minimum of the above three values is $58$ which is then our
estimate on $M_2(C_{\mathcal{L}}(D,12Q),C_{\mathcal{L}}(D,8Q))$.\\

Finally consider $m=3$. There is only one choice of
$(\gamma_{i_1},\gamma_{i_2},\gamma_{i_3})$ namely $(9,10,12)$. By inspection
there are exactly $8$ numbers that are in either $9+H(Q)$ or $10+H(Q)$
but not in $12+H(Q)$. Hence, our estimate on
$M_3(C_{\mathcal{L}}(D,12Q),C_{\mathcal{L}}(D,8Q))$ becomes $n-12+8=60$.
\end{example}

\begin{example}\label{exx2}
This is a continuation of Example~\ref{exx1}. The dimension of
$C_{\mathcal{L}}(D,10Q)$ and $C_{\mathcal{L}}(D,5Q)$ are $6$ and $3$,
respectively. Theorem~\ref{theresult} tells us that\\   
$M_m(C_{\mathcal{L}}(D,10Q), C_{\mathcal{L}}(D,5Q))$ 
is at least $n-10=54$, $n-10+4=58$ and $n-10+4+3=61$,
for $m$ equal to $1$, $2$ and $3$, respectively. The possible values of $\gamma_{i_s}$ to consider are $8,
9, 10$, which constitute a sequence without gaps. Hence, according to
our discussion prior to Example~\ref{exx1} in this case we cannot improve upon
Theorem~\ref{the2}.
\end{example}

\begin{example}\label{exx3}
This is a continuation of Example~\ref{exx1} and \ref{exx2}. The last
numbers of $H^\ast (Q)$ are $\{65,66,67,69,70,71,74,75,79\}$.
Hence, $\dim (C_{\mathcal{L}}(D,69Q))=64-5=59$ and $\dim
(C_{\mathcal{L}}(D,65Q))=64-8=56$. Theorem~\ref{theresult} gives no information
on the first two RGHWs and only tells
us that the third relative weight is larger than or equal to $2$. This,
however, is useless information as any space $D$ of
dimension $3$ has a support of size at least $3$. As we will now
demonstrate (\ref{eqdejlig}) guarantees that  
the three RGHWs are at least $3$, $6$,
and $8$, respectively. We first observe that 
\begin{eqnarray}
H^\ast (Q) \cap (69+H(Q))&=&\{69,74,79\} \nonumber \\
H^\ast (Q) \cap (67+H(Q))&=&\{67,71,75,79\} \nonumber \\
H^\ast (Q) \cap (66+H(Q))&=&\{66,70,71,74,75,79\}. \nonumber 
\end{eqnarray}
The smallest set is of size $3$ and we get 
$M_1(C_{\mathcal{L}}(D,69Q), C_{\mathcal{L}}(D,65Q))=3$. \\
The smallest union of two sets is the union of the first
two. This union is of size $6$ giving us 
$M_2(C_{\mathcal{L}}(D,69Q), C_{\mathcal{L}}(D,65Q))\geq 6$.\\
The union of all three sets is of size $8$. Hence,
$M_3(C_{\mathcal{L}}(D,69Q), C_{\mathcal{L}}(D,65Q))\geq 8$.
\end{example}

\subsection{A comparison between RGHW and GHW}

In \cite{MR1677034} and \cite{MR1737936}, respectively, 
Munuera \& Ramirez and Barbero \& Munuera  
determined the GHWs of any Hermitian
code. To state all their results is too extensive. However,
already from their master theorem~\cite[Prop.\ 12]{MR1677034},
\cite[Prop.\ 2.3]{MR1737936}, one can deduce that the RGHWs are often
much larger than the corresponding GHWs. 
\begin{definition}
Let ${\mbox{ev}} : \cup_{\mu=0}^\infty C_{\mathcal{L}}(D,\mu Q) \rightarrow
{\mathbb{F}}_q^n$ be the map ${\mbox{ev}}(f)=(f(P_1),$ $\ldots ,$
$f(P_n))$. The abundance $\alpha(\mu)$ is the dimension of $\ker {\mbox{ev}}$ when ${\mbox{ev}}$ is
restricted to $C_{\mathcal{L}}(D,\mu Q)$.
\end{definition}
\noindent The following is the master theorem
from~\cite{MR1677034,MR1737936}. Here, and
throughout the rest of this section, we use the notation $H(Q)=\{\rho_1,
\rho_2, \ldots \}$ with $\rho_i < \rho_j$ for $i<j$. 
\begin{theorem}\label{promunuera}
For $m=1, \ldots , \dim (C_{\mathcal{L}}(D,\mu Q))$ 
\begin{equation}
d_m(C_{\mathcal{L}}(D,\mu Q)) \geq n-\mu+\rho_m+\alpha(\mu). \label{eqpj1}
\end{equation}
Equality holds under the following conditions:
\begin{enumerate}
\item $\mu \in H^\ast(Q)$
\item $n-\mu+\rho_{m+\alpha(\mu)}\in H(Q)$, in which case we write
$n-\mu+\rho_{m+\alpha(\mu)}=iq+j(q+1)$, where $i, j$ are non-negative
  integers with $j < q$.
\item $i\leq q^2-q-1$ or $j=0$.
\end{enumerate}
\end{theorem}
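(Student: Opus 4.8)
The plan is to read (\ref{eqpj1}) as the generalized Hamming weight analogue of the Goppa bound for one-point codes, to derive it from Proposition~\ref{the1}, and then to realise equality under (1)--(3) by exhibiting an explicit optimal subspace of $C_{\mathcal{L}}(D,\mu Q)$ of the kind built in the proof of Lemma~\ref{lemhejsa}; this is the argument of \cite{MR1677034,MR1737936}, recast in the language of the present paper. For the bound, let $D'\subseteq C_{\mathcal{L}}(D,\mu Q)$ be $m$-dimensional and work in the basis (\ref{eqhoejsa}); Proposition~\ref{the1} then furnishes $\gamma_{i_1}<\cdots<\gamma_{i_m}$ in $H^\ast(Q)$, all $\le\mu$, with
\begin{align*}
\#{\mbox{Supp}}(D')\ &\ge\ \#\Bigl(H^\ast(Q)\cap\bigcup_{s=1}^{m}(\gamma_{i_s}+H(Q))\Bigr)\\
&=\ n-\#\Bigl(H^\ast(Q)\setminus\bigcup_{s=1}^{m}(\gamma_{i_s}+H(Q))\Bigr),
\end{align*}
so it suffices to bound $\#\bigl(H^\ast(Q)\setminus\bigcup_{s}(\gamma_{i_s}+H(Q))\bigr)$ by $\mu-\rho_m-\alpha(\mu)$ for every admissible tuple. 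First I would feed in the two elementary facts $\#\bigl(H(Q)\setminus(\lambda+H(Q))\bigr)=\lambda$ for $\lambda\in H(Q)$ (already used in Proposition~\ref{the1}) and $\alpha(\mu)=\#\bigl((H(Q)\setminus H^\ast(Q))\cap[0,\mu]\bigr)$: the former, applied with $\lambda=\gamma_{i_1}$, caps the set in question at $\gamma_{i_1}$ elements; the latter removes the $\alpha(\mu)$ high non-gaps absent from $H^\ast(Q)$; and a short count specific to $\Gamma=\langle q,q+1\rangle$ -- the worst tuple being the top $m$ members of $H^\ast(Q)\cap[0,\mu]$ -- accounts for the remaining $-\rho_m$. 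I would not grind through this last count here; it is in any case available ready-made in \cite{MR1677034,MR1737936}.

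For equality, under (1)--(3) I would build an $m$-dimensional $D'\subseteq C_{\mathcal{L}}(D,\mu Q)$ with $\#{\mbox{Supp}}(D')=n-\mu+\rho_m+\alpha(\mu)$, arguing in ${\mathbb F}_{q^2}[X,Y]/\langle X^{q+1}-Y^q-Y\rangle$ exactly as in the proof of Lemma~\ref{lemhejsa}. Hypothesis (2) makes $n-\mu+\rho_{m+\alpha(\mu)}$ a non-gap of $H(Q)$, and hypothesis (3) -- the bound $i\le q^2-q-1$, or $j=0$ -- upgrades this to membership in $H^\ast(Q)$; this is exactly what prevents the code from staying strictly above the bound, since (as in the remarks following Theorem~\ref{theresult}) it is precisely the non-gaps with $q^2-q\le i$ and $0<j<q$ that are missing from $H^\ast(Q)$. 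Starting from a function $f_0$ with that pole order at $Q$ and non-vanishing evaluation, I would multiply successively by linear factors $(X-\beta_s)$, $(X-\gamma_s)$, $(Y-\alpha_s)$ patterned on (\ref{hejeq1})--(\ref{hejeq3}) and (\ref{snabel1})--(\ref{snabel3}) to obtain $f_0,\dots,f_{m-1}$ whose pole orders descend through the non-gaps of $H(Q)$, and then count the common zeros of ${\mbox{ev}}(f_0),\dots,{\mbox{ev}}(f_{m-1})$ by the norm/trace book-keeping recalled in the proof of Lemma~\ref{lemhejsa}. Hypothesis (1), $\mu\in H^\ast(Q)$, ensures that $d_m$ is being measured inside $C_{\mathcal{L}}(D,\mu Q)$ itself and not a proper subcode with a smaller weight.

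The hard part will be this last zero count: one must check that the linear factors can be chosen so that no unexpected cancellations occur and no common zeros appear beyond the prescribed $n-\mu+\rho_m+\alpha(\mu)$ of them, which then forces ${\mbox{ev}}(f_0),\dots,{\mbox{ev}}(f_{m-1})$ to be linearly independent with support of exactly that size. This is the technical core of \cite{MR1677034,MR1737936}, carried out there by an induction on $m$ in the same spirit as the induction behind Lemma~\ref{lemhejsa}; the three numerical conditions on $\mu$ are precisely what keeps that induction from failing near the top of the semigroup, where $H^\ast(Q)$ and $H(Q)$ diverge.
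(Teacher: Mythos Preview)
The paper does not prove Theorem~\ref{promunuera}. It is introduced with the sentence ``The following is the master theorem from~\cite{MR1677034,MR1737936}'' and no argument is given; the result is simply quoted and then \emph{applied} in Propositions~\ref{proorig} and~\ref{propdifference}. There is therefore no proof in the paper for your proposal to be compared against.

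That said, your sketch is a reasonable outline of what the cited references do, and you acknowledge this yourself. Two comments. First, the inequality~(\ref{eqpj1}) is not specific to the Hermitian semigroup: it holds for any one-point algebraic geometric code and is obtained in \cite{MR1677034} by a short Riemann--Roch argument (lift an $m$-dimensional $D'\subseteq C_{\mathcal L}(D,\mu Q)$ to an $(m+\alpha(\mu))$-dimensional space of functions in $\mathcal{L}(\mu Q)$ vanishing on the complement of ${\mbox{Supp}}(D')$, and bound pole orders). Routing it through Proposition~\ref{the1} and a ``count specific to $\Gamma=\langle q,q+1\rangle$'' is possible but circuitous, and your assertion that the worst tuple is the top $m$ members of $H^\ast(Q)\cap[0,\mu]$ would itself require proof. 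Second, for equality your plan---build $f_0,\dots,f_{m-1}$ as products of linear factors in $X$ and $Y$ and count common zeros via the norm/trace book-keeping, exactly as in Lemma~\ref{lemhejsa}---is indeed the method of \cite{MR1677034,MR1737936}, and your reading of conditions (1)--(3) as precisely what keeps the relevant pole orders inside $H^\ast(Q)$ is correct.
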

Observe that Theorem~\ref{promunuera} and
Theorem~\ref{theresult}, respectively, produce similar estimates for the
minimum distance and the relative minimum distance. Similarly for the second GHW and the second
RGHW. From the last part of
Theorem~\ref{theresult} we conclude that for $m=1,2$, whenever  
$m \leq \mu_1-\mu_2 \leq q+1$, 
$c -1 \leq \mu_2$  and  $\mu_1 < n-c$ holds, then
$M_m(C_{\mathcal{L}}(D,\mu_1Q),C_{\mathcal{L}}(D,\mu_2Q))=d_m(C_{\mathcal{L}}(D,\mu_1Q))$
(recall that $c$ is the conductor). As shall be demonstrated in the
following, for higher values of $m$, $M_m(C_{\mathcal{L}}(D,\mu_1Q),C_{\mathcal{L}}(D,\mu_2Q))$ is often much
larger than $d_m(C_{\mathcal{L}}(D,\mu_1Q))$. 
\begin{proposition}\label{proorig}
For  
 $q>2$, $1\leq m \leq q+1$ and $2q^2-q\leq \mu \leq n-c$ we have
$d_m(C_{\mathcal{L}}(D,\mu Q))=n-\mu+\rho_m$.
\end{proposition}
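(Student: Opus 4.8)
The plan is to invoke the equality part of Theorem~\ref{promunuera} with $C_2 = \{\vec 0\}$, i.e.\ to verify its three conditions for the code $C_{\mathcal{L}}(D,\mu Q)$ under the hypotheses $q>2$, $1 \leq m \leq q+1$, and $2q^2-q \leq \mu \leq n-c$. First I would record the relevant structural facts about the Hermitian setting: $n=q^3$, $c=q(q-1)$, and the range $\mu \leq n-c$ guarantees (by the well-known formula for the dimension, since $\mu$ is below $n-c$ but we also need $\mu \geq c-1$, which follows from $2q^2-q \geq q(q-1) = c$ when $q \geq 1$) that the evaluation map is injective on $C_{\mathcal{L}}(D,\mu Q)$, hence $\alpha(\mu)=0$. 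This reduces~(\ref{eqpj1}) to the target inequality $d_m \geq n-\mu+\rho_m$, and also simplifies condition~(2) to requiring $n-\mu+\rho_m \in H(Q)$.

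Next I would check condition~(1), $\mu \in H^\ast(Q)$. Since $\alpha(\mu)=0$ and $\mu \geq c-1$, every integer $\geq c-1$ that lies in $H(Q)$ lies in $H^\ast(Q)$ as well (the non-gaps $\geq c$ are all in $H^\ast(Q)$, and $c-1$ is a gap); but one must be slightly careful — if $\mu$ happens to be a gap of $H(Q)$ then $C_{\mathcal{L}}(D,\mu Q) = C_{\mathcal{L}}(D,(\mu-1)Q)$, and the cleanest fix is to observe that both sides of the claimed formula only depend on the code, so we may replace $\mu$ by the largest element of $H^\ast(Q)$ not exceeding it; alternatively, one checks directly that for $\mu$ in the stated range the relevant $\mu$ is a non-gap. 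Then for condition~(2), write $n-\mu+\rho_m$: since $\mu \leq n-c$ we have $n-\mu \geq c$, so $n-\mu+\rho_m \geq c$ and is therefore automatically a non-gap of $H(Q)=\langle q,q+1\rangle$ — every integer $\geq c=q(q-1)$ is in the semigroup. So condition~(2) holds and we may write $n-\mu+\rho_m = iq+j(q+1)$ with $0 \leq j < q$.

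Finally, condition~(3) asks that $i \leq q^2-q-1$ or $j=0$; this is the step I expect to be the main obstacle, since it is the only place the specific lower bound $\mu \geq 2q^2-q$ and the restriction $m \leq q+1$ get used. The idea is to bound $i$ from above: from $n-\mu+\rho_m = iq+j(q+1) \geq iq$ we get $i \leq (n-\mu+\rho_m)/q$. Using $n=q^3$, $\mu \geq 2q^2-q$, and $\rho_m \leq \rho_{q+1}$ (the first $q+1$ non-gaps of $\langle q,q+1\rangle$ are $0,q,q+1,2q,2q+1,2q+2,\dots$, so $\rho_{q+1}$ is a small multiple of $q$, roughly $2q$ for the relevant range), one computes $n-\mu+\rho_m \leq q^3 - 2q^2+q + \rho_{q+1}$, and dividing by $q$ should give $i \leq q^2 - 2q + O(1)$, comfortably below $q^2-q-1$ once $q>2$. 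I would carry out this estimate carefully, pinning down $\rho_{q+1}$ exactly (it equals $2q + \lfloor (q-1)/q\rfloor$-type expression — in fact $\rho_{q+1} = 2q+1$ for $q\geq 2$, as the non-gaps up to index $q+1$ are $0,q,q+1,2q,2q+1,\dots$), and then the inequality $i \leq q^2-q-1$ follows by a direct comparison. With all three conditions verified, Theorem~\ref{promunuera} gives equality, completing the proof.
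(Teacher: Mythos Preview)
Your approach is the same as the paper's---verify the three conditions of Theorem~\ref{promunuera}---and your treatment of $\alpha(\mu)=0$, of condition~(1), and of condition~(2) is fine (in fact for condition~(1) you are overcautious: since $\mu \geq 2q^2-q > c$ every such $\mu$ is a non-gap, and the paper simply quotes that $\mu \in H^\ast(Q)$ whenever $c \leq \mu < n$).

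The genuine gap is in condition~(3). Your claim that $\rho_{q+1}=2q+1$ is false: the non-gaps of $\langle q,q+1\rangle$ come in blocks of sizes $1,2,3,\ldots$, so the $(q+1)$-th non-gap lies in roughly the $\sqrt{2q}$-th block and is of order $q\sqrt{q}$, not $O(q)$. Already for $q=5$ one has $\rho_6=12=2q+2$, and for $q=7$ one has $\rho_8=22=3q+1$. With your incorrect value the inequality $i \leq q^2-q-1$ does not follow, and even with a slightly larger ad~hoc constant it will fail for large $q$. The correct estimate, and the one the paper uses, is $\rho_m \leq c = q(q-1)$ for all $m \leq q+1$: this holds precisely because for $q>2$ there are at least $q+1$ non-gaps in $[0,c]$ (there are $q(q-1)/2+1$ of them, and $q(q-1)/2+1 \geq q+1$ iff $q\geq 3$). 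This is exactly where the hypothesis $q>2$ enters. With $\rho_m \leq q(q-1)$ and $\mu \geq 2q^2-q$ one gets $n-\mu+\rho_m \leq q^3 - (2q^2-q) + (q^2-q) = q^3-q^2$, and from $iq+j(q+1) \leq q^3-q^2$ with $0 \leq j < q$ one reads off that either $j=0$ or $i \leq q^2-q-1$, which is condition~(3).
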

\begin{proof}
It is well-known~\cite{tiersma} that for $\mu \leq q^3-1$ we have
$\alpha(\mu)=0$. 
Therefore~(\ref{eqpj1}) simplifies to $d_m(C_{\mathcal{L}}(D,\mu Q)) \geq
n-\mu+\rho_m$ under the conditions of the proposition. To prove the proposition it suffices to demonstrate the
conditions 1, 2, and 3 of Theorem~\ref{promunuera}. As is well-known
$\mu \in H^\ast (Q)$ when $c \leq \mu <n$. However, $c < 2q^2-q$ and therefore condition
1 follows. To see that condition 2 is satisfied note that by
assumption $c \leq n-\mu$ and so $n-\mu+\rho_{m+\alpha (\mu)} \geq c$. To demonstrate condition
3 it suffices to show 
\begin{equation}
n-\mu+\rho_m\leq q^3-q^2.\label{eqnyong}
\end{equation}
Observe that $\rho_m \leq q(q-1)$ which holds because of the
assumption that $m \leq q+1$ and $q >2 $ and because the number of
gaps in $H(Q)$ equals $q(q-1)/2$. As a consequence the assumption
$2q^2-q \leq \mu$ implies $q^2+\rho_m\leq \mu$ from which we derive~(\ref{eqnyong}).
\end{proof}

\begin{proposition}\label{propdifference}
Consider the field ${\mathbb{F}}_{q^2}$, with $q>2$. Let $3 \leq
\tilde{\mu} \leq
q+1$ be fixed. For $m=3, \ldots , \tilde{\mu}$ there are at least $q^3-3q^2+1$
different codes
$C_{\mathcal{L}}(D,\mu Q)$ for which
$d_m(C_{\mathcal{L}}(D,\mu Q))=n-\mu+\rho_m$ and
simultaneously
$M_m(C_{\mathcal{L}}(D,\mu Q),C_{\mathcal{L}}(D,(\mu-\tilde{\mu})Q))=n-\mu+\sum_{i=0}^{m-2}(q-i)$
hold. For these
codes we have 
\begin{eqnarray}
M_m(C_{\mathcal{L}}(D,\mu Q),C_{\mathcal{L}}(D,(\mu-\tilde{\mu})Q))
-d_m(C_{\mathcal{L}}(D,\mu Q))\nonumber \\
=
\big(\sum_{s=0}^{m-2}(q-s)\big)-\rho_m > 0.\label{eqdetvirker}
\end{eqnarray}
\end{proposition}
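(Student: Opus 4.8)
The plan is to single out an explicit interval of levels $\mu$ on which the hypotheses of Proposition~\ref{proorig} and of the equality part of Theorem~\ref{theresult} (taken with $\mu_1=\mu$, $\mu_2=\mu-\tilde{\mu}$) hold \emph{simultaneously}, to count the integers in that interval, and then simply to subtract the two resulting closed formulas.

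First I would fix the range of $\mu$. Proposition~\ref{proorig} gives $d_m(C_{\mathcal{L}}(D,\mu Q))=n-\mu+\rho_m$ as soon as $q>2$, $1\le m\le q+1$ and $2q^2-q\le\mu\le n-c$. The equality part of Theorem~\ref{theresult}, applied with $\mu_1=\mu$ and $\mu_2=\mu-\tilde{\mu}$, in addition requires $1\le m\le\tilde{\mu}=\mu_1-\mu_2\le q+1$ together with condition~(\ref{eq4}), that is, $c-1\le\mu-\tilde{\mu}$ and $\mu<n-c$. Since $3\le\tilde{\mu}\le q+1$ and $m\le\tilde{\mu}$, the constraints on $m$ and $\tilde{\mu}$ are automatic, and because $2q^2-q-(q^2-q-1+\tilde{\mu})=q^2+1-\tilde{\mu}\ge q^2-q>0$ for $q>2$, the condition $c-1\le\mu-\tilde{\mu}$ holds once $\mu\ge 2q^2-q$. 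Hence both results apply for every integer $\mu$ with $2q^2-q\le\mu\le n-c-1$, an interval containing $q^3-3q^2+2q\ge q^3-3q^2+1$ integers. Every such $\mu$ lies in $[c,n-1]$, hence in $H^\ast(Q)$, so the codes $C_{\mathcal{L}}(D,\mu Q)$ are pairwise distinct; and a Riemann--Roch computation together with~(\ref{eq4}) shows $\dim C_{\mathcal{L}}(D,\mu Q)\ge q+1\ge m$ and $\dim C_{\mathcal{L}}(D,\mu_1Q)-\dim C_{\mathcal{L}}(D,\mu_2Q)=\tilde{\mu}\ge m$, so all the weights below are well defined.

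For such a $\mu$ and $m\in\{3,\dots,\tilde{\mu}\}$, Proposition~\ref{proorig} gives $d_m(C_{\mathcal{L}}(D,\mu Q))=n-\mu+\rho_m$, while the last inequality of Theorem~\ref{theresult}, combined with Lemma~\ref{lem2} for $\Gamma=\langle q,q+1\rangle$ (which gives $Z(H(Q),\tilde{\mu},m)=\sum_{s=0}^{m-2}(q-s)$, valid since $m\le\tilde{\mu}\le q+1$), gives $M_m(C_{\mathcal{L}}(D,\mu Q),C_{\mathcal{L}}(D,(\mu-\tilde{\mu})Q))=n-\mu+\sum_{s=0}^{m-2}(q-s)$. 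Subtracting yields the identity in~(\ref{eqdetvirker}); only its positivity remains.

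The hard part is showing $\sum_{s=0}^{m-2}(q-s)>\rho_m$ for $3\le m\le q+1$. I would write the left-hand side as $(m-1)q-\tfrac{(m-1)(m-2)}{2}$ and describe the non-gaps of $\langle q,q+1\rangle$ below $c$ as the consecutive blocks $\{kq,kq+1,\dots,kq+k\}$, $k=0,1,\dots$, which are pairwise disjoint while $k<q$ (the relevant range, since $m\le q+1$). Then $\rho_m=kq+(m-1-\tfrac{k(k+1)}{2})$, where $k$ is fixed by $\tfrac{k(k+1)}{2}<m\le\tfrac{(k+1)(k+2)}{2}$, and one has $0\le m-1-\tfrac{k(k+1)}{2}\le k$ as well as $\tfrac{k(k+1)}{2}\le m-1\le q$. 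Substituting, the desired inequality becomes an elementary polynomial inequality in $q$ and $k$, which I would settle by a short case distinction on $k$ (the cases $k=1$, $k=2$ and $k\ge 3$ behave somewhat differently, with the smallest fields checked directly). Combined with the two preceding steps, this establishes~(\ref{eqdetvirker}) and hence the proposition.
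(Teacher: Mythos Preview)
Your proposal is correct and follows essentially the same approach as the paper: the paper's proof is the single line ``Follows from Theorem~\ref{promunuera}, Theorem~\ref{theresult} and a study of $H(Q)$,'' and you make this explicit by intersecting the ranges of Proposition~\ref{proorig} (a direct consequence of Theorem~\ref{promunuera}) and the equality clause of Theorem~\ref{theresult}, counting the resulting interval $[2q^2-q,\,n-c-1]$, and then establishing the strict inequality $\sum_{s=0}^{m-2}(q-s)>\rho_m$ via the block description of $\langle q,q+1\rangle$. One cosmetic point: when you invoke ``the last inequality of Theorem~\ref{theresult}, combined with Lemma~\ref{lem2}'' you are really just using the equality clause of Theorem~\ref{theresult} under condition~(\ref{eq4}); the appeal to Lemma~\ref{lem2} is already built into~(\ref{eq3}) and is redundant.
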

\begin{proof}
Follows from Theorem~\ref{promunuera}, Theorem~\ref{theresult} and a
study of $H(Q)$.
\end{proof}
Note that if for fixed $\tilde{\mu}$ we divide the number of different
codes $C_{\mathcal{L}}(D,\mu Q)$ for
which (\ref{eqdetvirker}) holds by the number of 
different codes, which is $q^3$, then we get the ratio $R(q)\geq (q^3-3q^2+1)/q^3 \geq
1-3/q$. This ratio approaches $1$ as $q$ approaches infinity. For $q=4, 5, 7, 8, 9, 16$, and $32$, respectively, $R(q)$ is at least $0.25$, $0.4$, $0.57$, $0.62$, $0.66$, $0.81$, and
$0.9$, respectively. In Table~\ref{tabhusto} for different
values of $m$ and $q$ we list the difference between the parameters as
expressed in~(\ref{eqdetvirker}).
\begin{table}
$$
\begin{tabular}{lrrrrrrrr}
m&3&4&5&6&7&8&9&10\\
{\mbox{Diff}}(m,4) &2&1&1\\
{\mbox{Diff}}(m,5)&3&2&3&3\\
{\mbox{Diff}}(m,7)&5&4&7&9&6&6\\
{\mbox{Diff}}(m,8)&6&5&9&12&9&10&10\\
{\mbox{Diff}}(m,16)&14&13&25&36&33&42&50&57\\
\ \\
m&11&12&13&14&15&16&17\\
{\mbox{Diff}}(m,16)&51&56&60&63&65&55&55
\end{tabular}
$$
\caption{Diff$(m,q)$ is the value of~(\ref{eqdetvirker}).}
\label{tabhusto}
\end{table}

\section{Ramp schemes based on Hermitian codes}\label{sec7}
In this section we consider ramp secret sharing schemes $D_1/D_2$
where $D_1=C_2^\perp$, $D_2=C_1^\perp$, and $C_2\subsetneq C_1$ are
Hermitian codes 
over ${\mathbb{F}}_{q^2}$, with $\dim (C_1)- \dim (C_2)=\tilde{\mu}$. Recall from Theorem~\ref{theconnection} in
Section~\ref{sec2} that $t_m+1=M_m(C_1,C_2)$, $m=1, \ldots , \tilde{\mu}$
is the size of the smallest group that can reveal $m$ $q^2$-bits of
information. Also recall that $r_m=n-M_{\tilde{\mu}-m+1}(D_1,D_2)+1$
is the smallest number such that any group of this size can reveal $m$
$q^2$-bits of information. From Section~\ref{secherm} we know how to
determine/estimate $M_m(C_1,C_2)$. Now~\cite[Th.\ 1]{tiersma} tells us
that for $\mu \in H^\ast (Q)$ we have $C_{\mathcal{L}}(D,\mu
Q)^\perp=C_{\mathcal{L}}(D,(n+c-2-\mu )Q)$.
 To establish information on $r_m$ we therefore need  
not apply Theorem~\ref{the2dual} (the theorem for duals of one-point algebraic
geometric codes), but can instead use the already established
information on the RGHW of $C_2\subseteq C_1$. From 
Theorem~\ref{theresult} we get the following result:
\begin{theorem}
Let $\mu ,\tilde{\mu}$ be positive integers satisfying 
\begin{eqnarray}
\tilde{\mu }\leq q+1, \, \, c-1+\tilde{\mu }\leq \mu \leq n-1. \label{eqnoget4}
\end{eqnarray}
Consider the ramp secret sharing scheme $D_1/D_2=C_2^\perp/C_1^\perp$ where $C_1=C_{\mathcal{L}}(D,\mu Q)$ and
$C_2=C_{\mathcal{L}}(D,(\mu -\tilde{\mu})Q)$. The codimension (and
thereby the length of the secret) equals $\tilde{\mu}$. Furthermore
for $m=1, \ldots , \tilde{\mu}$ it holds that
\begin{eqnarray}
t_m &\geq& n-\mu +\sum_{s=0}^{m-2}(q-s) -1, \label{eqhejsa1}\\
r_m &\leq&
n-\mu +c+\tilde{\mu}-1-\sum_{s=0}^{\tilde{\mu}-m-1}(q-s). \label{eqhejsa2}
\end{eqnarray}
Equality holds simultaneously in (\ref{eqhejsa1}) and (\ref{eqhejsa2})
when the second condition in~(\ref{eqnoget4}) is replaced with 
\begin{equation}
2c-2+\tilde{\mu } < \mu < n-c.\label{eqnybetingelse}
\end{equation}
\end{theorem}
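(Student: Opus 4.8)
\noindent The plan is to derive everything from Theorem~\ref{theresult} together with the self--duality of Hermitian codes, using Theorem~\ref{theconnection} as the bridge. Since the scheme considered is $D_1/D_2$ with $D_1=C_2^\perp$ and $D_2=C_1^\perp$, Theorem~\ref{theconnection} (read with $D_1,D_2$ in place of the $C_1,C_2$ there, and $\ell=\tilde{\mu}$) gives, for $m=1,\dots,\tilde{\mu}$,
$$
t_m=M_m(D_2^\perp,D_1^\perp)-1=M_m(C_1,C_2)-1,\qquad r_m=n-M_{\tilde{\mu}-m+1}(D_1,D_2)+1=n-M_{\tilde{\mu}-m+1}(C_2^\perp,C_1^\perp)+1,
$$
because $D_2^\perp=C_1$ and $D_1^\perp=C_2$. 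For the codimension claim I would argue first: under~(\ref{eqnoget4}) the $\tilde{\mu}$ consecutive integers $\mu-\tilde{\mu}+1,\dots,\mu$ all lie in $\{c,c+1,\dots,n-1\}\subseteq H(Q)$, and any integer $<n$ that lies in $H(Q)$ lies in $H^\ast(Q)$ as well (the evaluation map is injective on $\mathcal L(\lambda Q)$ for $\lambda<n$); hence $\dim(C_1)-\dim(C_2)=\tilde{\mu}$, i.e.\ $\dim(D_1)-\dim(D_2)=\tilde{\mu}$, which is the secret length. Then for~(\ref{eqhejsa1}) I would simply apply the bound~(\ref{eq3}) of Theorem~\ref{theresult} directly to $C_1=C_{\mathcal L}(D,\mu Q)$ and $C_2=C_{\mathcal L}(D,(\mu-\tilde{\mu})Q)$; this is legitimate since $1\le m\le\tilde{\mu}=\mu_1-\mu_2\le q+1$, and it gives $M_m(C_1,C_2)\ge n-\mu+\sum_{s=0}^{m-2}(q-s)$, whence~(\ref{eqhejsa1}) after subtracting $1$.

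For~(\ref{eqhejsa2}) the first step is to present the dual codes as Hermitian codes. Since $\mu\in H^\ast(Q)$, \cite[Th.~1]{tiersma} gives $C_1^\perp=C_{\mathcal L}(D,bQ)$ with $b=n+c-2-\mu$, and (when $\mu-\tilde{\mu}\in H^\ast(Q)$) $C_2^\perp=C_{\mathcal L}(D,aQ)$ with $a=n+c-2-\mu+\tilde{\mu}$, so $a-b=\tilde{\mu}$; the only borderline value is $\mu-\tilde{\mu}=c-1$, the largest gap of $H(Q)$, in which case $C_2=C_{\mathcal L}(D,(c-2)Q)$ with $c-2=(q-2)(q+1)\in H^\ast(Q)$ and one checks that the estimate below comes out the same --- a technicality I would leave to the reader. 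Next I would apply~(\ref{eq3}) of Theorem~\ref{theresult} to the pair $C_{\mathcal L}(D,aQ)\supsetneq C_{\mathcal L}(D,bQ)$ with the index $\tilde{\mu}-m+1$ in place of $m$ (permissible since $1\le\tilde{\mu}-m+1\le\tilde{\mu}\le q+1$), obtaining $M_{\tilde{\mu}-m+1}(C_2^\perp,C_1^\perp)\ge n-a+\sum_{s=0}^{\tilde{\mu}-m-1}(q-s)$. Substituting this into $r_m=n-M_{\tilde{\mu}-m+1}(C_2^\perp,C_1^\perp)+1$ and using $a+1=n-\mu+c+\tilde{\mu}-1$ yields~(\ref{eqhejsa2}).

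For the equality assertion the plan is to check that~(\ref{eqnybetingelse}) makes the hypothesis~(\ref{eq4}) of Theorem~\ref{theresult} hold for \emph{both} pairs. For $(C_1,C_2)$: condition~(\ref{eqnybetingelse}) forces $\mu-\tilde{\mu}>2c-2\ge c-1$ and $\mu<n-c$, which is exactly~(\ref{eq4}) for $(\mu_1,\mu_2)=(\mu,\mu-\tilde{\mu})$, so the last part of Theorem~\ref{theresult} turns~(\ref{eq3}), hence~(\ref{eqhejsa1}), into an equality. For $(C_2^\perp,C_1^\perp)=(C_{\mathcal L}(D,aQ),C_{\mathcal L}(D,bQ))$: condition~(\ref{eq4}) now reads $c-1\le b$ and $a<n-c$; the first is equivalent to $\mu\le n-1$ (which follows from $\mu<n-c$) and the second is equivalent to $2c-2+\tilde{\mu}<\mu$, i.e.\ to~(\ref{eqnybetingelse}) itself. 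Hence equality also holds in~(\ref{eq3}) for the dual pair, forcing equality in~(\ref{eqhejsa2}); and since~(\ref{eqnybetingelse}) keeps $\mu-\tilde{\mu}>c-1$, the borderline case does not occur, so the two equalities hold simultaneously.

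I expect the only genuinely delicate point to be the bookkeeping: keeping the two dualizations consistent --- the passage $D_1/D_2=C_2^\perp/C_1^\perp$ through Theorem~\ref{theconnection} and the Hermitian self--duality \cite[Th.~1]{tiersma} --- and, relatedly, verifying that the inner hypothesis $\mu_1<n-c$ of Theorem~\ref{theresult} translates into $2c-2+\tilde{\mu}<\mu$ once it is imposed on the dual codes $C_2^\perp\supsetneq C_1^\perp$. Apart from the minor technicality of the gap value $\mu-\tilde{\mu}=c-1$, everything else is a direct substitution into Theorem~\ref{theresult}.
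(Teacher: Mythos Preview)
Your proposal is correct and follows essentially the same route as the paper, which does not give an explicit proof but indicates in the paragraph preceding the theorem that one should combine Theorem~\ref{theconnection}, Tiersma's duality $C_{\mathcal{L}}(D,\mu Q)^\perp=C_{\mathcal{L}}(D,(n+c-2-\mu)Q)$, and Theorem~\ref{theresult}. Your bookkeeping is accurate: the translation of~(\ref{eq4}) for the dual pair into $2c-2+\tilde{\mu}<\mu$ and $\mu\le n-1$ is exactly what yields the equality condition~(\ref{eqnybetingelse}), and your treatment of the borderline case $\mu-\tilde{\mu}=c-1$ (noting $C_{\mathcal L}(D,(c-1)Q)=C_{\mathcal L}(D,(c-2)Q)$ with $c-2\in H^\ast(Q)$) is more than the paper itself spells out.
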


\begin{example}\label{exknox}
In this example we consider schemes over ${\mathbb{F}}_{64}$. That is,
$q=8$ and the number of participants is $n=512$. The
assumption (\ref{eqnoget4}) for (\ref{eqhejsa1})
and (\ref{eqhejsa2}) to hold is $\tilde{\mu} \leq 9$, $55+\tilde{\mu} \leq \mu \leq 511$, the
latter corresponding to $1 \leq n-\mu \leq
457-\tilde{\mu}$. By~(\ref{eqnybetingelse}) equality
holds simultaneously in~(\ref{eqhejsa1})
and (\ref{eqhejsa2}) when $56 <n-\mu < 402-\tilde{\mu}$ holds. In
Table~\ref{tabtop} we list for $\tilde{\mu }=q+1$ the values of
$G_1(m,q)=\sum_{s=0}^{m-2}(q-s)$ (which is our lower bound on
$(t_m+1)-(n-\mu )$) and $G_2(m,\tilde{\mu},q)=c+\tilde{\mu}-1-\sum_{s=0}^{\tilde{\mu}-m-1}(q-s)$ (which is our upper bound on
$r_m-(n-\mu )$). Note that
$G_1(m,q)=Z(H(Q),\mu,m)$ (Lemma~\ref{lem2}).
\begin{table}
\begin{center}
\begin{tabular}{crrrrrrrrr}
$m$&1&2&3&4&5&6&7&8&9\\
$G_1(m,8)$&0&8&15&21&26&30&33&35&36\\
$G_2(m,9,8)$&28&29&31&34&38&43&49&56&64
\end{tabular}
\end{center}
\caption{Parameters of the ramp schemes in Example~\ref{exknox}.}
\label{tabtop}
\end{table}
For the considered choice of $\tilde{\mu }$ the secret is of size equal to $9$ $q^2$-bits. One can get much
information from Table~\ref{tabtop}. 
 Assume for instance
$n-\mu=130$. Then the smallest group that can derive some information is of
 size $130+0=130$, hence $t_1=129$. 
The smallest group size for which any group can
 derive some information is $r_1=130+28=158$. Groups of size $158$ on the other hand can
 never obtain more than $5$ $q^2$-bits of information as $G_1(5,8)\leq
 158-130
 < G_1(6,8)$. Some group of size
 $t_3+1 =130+15=145$ can derive at least $3$ $q^2$-bits of
 information,
however, $r_3=130+31=161$ is the smallest group size guaranteed to reveal $3$ $q^2$-bits
 of information. Any group of size $r_9=130+64=194$ can reveal the entire
 secret. Some group of size $t_9+1=130+36=166$ can reveal the entire secret whereas
 other groups of size $166$ can reveal no more than $4$ $q^2$-bits of
 information.
\end{example}
\begin{example}\label{exknoxy}
In this example we consider schemes over ${\mathbb{F}}_{256}$ . That
is, $q=16$ and the number of participants is
$n=4096$.  Assumption~(\ref{eqnoget4}) is $1 \leq n-\mu <
3857-\tilde{\mu }$ and
by~(\ref{eqnybetingelse}) equality holds in (\ref{eqhejsa1}) and (\ref{eqhejsa2}) simultaneously
if 
\begin{eqnarray}
240< n-\mu < 3618-\tilde{\mu}\label{eqasspt}
\end{eqnarray}
In Table~\ref{tabtop2} we list values of $G_1(m,16)$ and
$G_2(m,16,16)$ where the functions $G_1$ and $G_2$ are as in Example~\ref{exknox}.
\begin{table}
\begin{center}
\begin{tabular}{crrrrrrrr}
$m$&1&2&3&4&5&6&7&8\\
$G_1(m,16)$&0&16&31&45&58&70&81&91\\
$G_2(m,16,16)$&120&122&125&129&134&140&147&155\\
\ \\
$m$&9&10&11&12&13&14&15&16\\
$G_1(m,16)$&100&108&115&121&126&130&133&135\\
$G_2(m,16,16)$&164&174&185&197&210&224&239&255
\end{tabular}
\end{center}
\caption{Parameters of the ramp schemes in Example~\ref{exknoxy}.}
\label{tabtop2}
\end{table}
 Assuming (\ref{eqasspt}), then from the table we get the following
 information: Some groups of size $t_1+1=n-\mu$ may reveal $1$ $q^2$-bit
 of information whereas other groups of size $n-\mu+119$ cannot as $r_1=n-\mu+120$. Some group of
 size $t_{11}+1=n-\mu +115$ can reveal $11$ $q^2$-bits of information whereas some
 group of the same size can not reveal anything. Any group of size
 $n-\mu+135$ can for sure reveal $5$ $q^2$-bits of information and some
 group of the same size can reveal everything. Any group of size
 $r_{16}=n-\mu +255$ can reveal the entire secret.  
\end{example}

\begin{remark}
Assume that~(\ref{eqnoget4}) holds and let $m\leq \tilde{\mu }$. The
difference between the smallest size for which any group can reveal
$m$ $q^2$-bits of information and the smallest size for which some
group can reveal $m$ $q^2$-bits of information equals
$(n-M_{\tilde{\mu }+1-m}(C_2^\perp,C_1^\perp)+1)-M_m(C_1,C_2)$ which is
at most  
\begin{equation}
c+\tilde{\mu }-1-\sum_{s=0}^{\tilde{\mu }-m-1}(q-s)-\sum_{s=0}^{m-2}(q-s)\label{eqsnables}
\end{equation}
(with equality if $2c-2+\tilde{\mu } <\mu < n-c $). The maximum of~(\ref{eqsnables})
is attained at $m=1$ and $m=\tilde{\mu }$. The corresponding
``worst-case'' difference equals $c+\tilde{\mu }-1-\frac{\tilde{\mu
  }-1}{2}(2q-\tilde{\mu }+2)$. This
number is highest possible when $\tilde{\mu }=q$ and $\tilde{\mu }=q+1$, in which case it
equals the genus $g=(q^2-q)/2$.
\end{remark}
We conclude the section with an example in which we show how to
improve upon (\ref{eqhejsa1}) and (\ref{eqhejsa2}) when the condition (\ref{eqnybetingelse}) is not satisfied.

\begin{example}\label{exslutprut}
In this example we consider schemes over
${\mathbb{F}}_{16}$. That is, $q=4$ and the number of participants
is $n=64$. We consider secrets of length $3$. Hence, we require that 
$$\dim (C_1=C_{\mathcal{L}}(D,\mu_1Q)) - \dim
(C_2=C_{\mathcal{L}}(D,\mu_2Q))=3.$$
We have
$$H^\ast (Q)=\{0,4,5,8,9,10,12,13, \cdots , 62, 63, 65, 66,
67, 70, 71, 75\}$$
and therefore without loss of generality the possible choices of $(\mu_1,\mu_2)$ are
\begin{eqnarray}
&&\{(\mu_1^{(1)},\mu_2^{(1)}), \ldots ,
  (\mu_1^{(62)},\mu_2^{(62)})\}\nonumber \\
&=&\{(5,-1), (8,0),(9,4),(10,5),(12,8),(13,9),(14,10),(15,12), 
  \nonumber \\
&&\ldots ,(63,60),(65,61),(66,62),(67,63),(70,65),(71,66),(75,67)\},\nonumber
\end{eqnarray}
where for $(5,-1)$ we mean that $C_2$ equals $\{\vec{0}\}$.
In the following we calculate
\begin{eqnarray}
t_m&=&M_m(C_{\mathcal{L}}(D,\mu_1Q),C_{\mathcal{L}}(D,\mu_2Q))-1,\nonumber
\\
r_m&=&n-M_{\mu_2-\mu_1-m+1}(C_{\mathcal{L}}(D,(n+c-2-\mu_2)Q),C_{\mathcal{L}}(D,(n-c+2-\mu_1)Q))+1\nonumber
\\
&=&n-M_{\mu_2-\mu_1-m+1}(C_{\mathcal{L}}(D,(74-\mu_2)Q),C_{\mathcal{L}}(D,(74-\mu_1)Q))+1,\nonumber
\end{eqnarray}
 $m=1,2,3$, for all the above choices of $(\mu_1,\mu_2)$.\\

Recall from the discussion prior to Example~\ref{exx1} in
Section~\ref{secherm} that for some choices of $(\mu_1,\mu_2)$ 
we may achieve better estimates on the RGHW than~(\ref{eq3}). This is
done by
applying the method of  Example~\ref{exx1} and
Example~\ref{exx3} which corresponds to (\ref{eqihopla2lig}) and (\ref{eqdejlig}), respectively. Specifically for
$\mu_1=5,8,9,10,12,13$ we do not have 
\begin{eqnarray}
\{\mu_1, \mu_1-1, \mu_1-2\}
\subseteq H^\ast (Q)\label{eqdeterfredag}
\end{eqnarray}
 and to calculate $t_m$ we therefore apply the
method of Example~\ref{exx1}. By inspection, for $\mu_1=53$, $57$, $58$, $61$, $62$, $63$, $65$,
$66$, $67$, $70$, $71$, $75$ we have that  $H^\ast (Q) \backslash (\mu_1+H(Q))$ is
strictly smaller than $H(Q)\backslash(\mu_1+H(Q))$ and also for
some of these values, (\ref{eqdeterfredag}) does not hold
either. Hence, we apply  the method of Example~\ref{exx3}. In conclusion
the values of $\mu_1$ for which we can potentially obtain improved information on
$t_m$ are
\begin{eqnarray}
S_1&=&\{\mu_1^{(1)},\mu_1^{(2)},\ldots ,\mu_1^{(6)},\mu_1^{(46)},\mu_1^{(50)},\mu_1^{(51)},\mu_1^{(54)},\mu_1^{(55)},\ldots
, \mu_1^{(62)}\}\label{eqdyt} \\
&=&\{5, 8, 9, 10, 12, 13, 53, 57 , 58, 61, 62, 63, 65,
66, 67, 70, 71, 75\}.\nonumber
\end{eqnarray}
We next discuss $r_m$. Here,  a little care is needed in the analysis:
as an example for
$(\mu_1,\mu_2)=(\mu_1^{(4)},\mu_2^{(4)})=(10,5)$ we have
$C_2^\perp=C_{\mathcal{L}}(D,(74-\mu_2)Q)=C_{\mathcal{L}}(D,69Q)$, but this
code is the same as $C_{\mathcal{L}}(D,67Q)$ because $68$ and $69$ do
not belong to $H^\ast(Q)$. This phenomenon corresponds to the fact
that actually 
$C_{\mathcal{L}}(D,\mu_2^{(s)}Q)^\perp=C_{\mathcal{L}}(D,\mu_1^{(63-s)}Q)$,
$s=1, \ldots , 62$. Hence,
from~(\ref{eqdyt}) we see that 
the values of $\mu_1$ for which we can potentially derive improved
information regarding $r_m$ are
\begin{eqnarray}
S_2&=&\{\mu_1^{(63-1)},\ldots ,\mu_1^{(63-6)},\mu_1^{(63-46)},\mu_1^{(63-50)},\mu_1^{(63-51)},\mu_1^{(63-54)},\ldots
, \mu_1^{(63-62)}\}\nonumber \\
&=&\{5, 8, 9, 10, 12, 13, 14, 15, 16, 19, 20, 24, 65,
66, 67, 70, 71, 75\}.\nonumber
\end{eqnarray}
Applying a mixture of the method from Example~\ref{exx1} and
Example~\ref{exx3} plus~(\ref{eq3}) we derive for $\mu_1\in S_1 \cup
S_2$ the information given in Table~\ref{tabtab}.\\
\begin{table}
\begin{tabular}{c|cccccccc}
$\mu_1$&5&8&9&10&12&13&14&15\\
$[t_1 ,r_1]$&[58,62]&[55,61]&[54,60]&[53,59]&[51,58]&[50,57]&[49,56]&[48,56]\\
$[t_2,r_2 ]$&[62,63]&[59,62]&[58,61]&[57,60]&[57,60]&[54,59]&[53,58]&[52,58]\\
$[t_3 ,r_3 ]$&[63,64]&[62,63]&[61,63]&[60,62]&[59,62]&[58,62]&[56,61]&[55,61]\\
\\
$\mu_1$&16&19&20&24&53&57&58&61\\
$[t_1 ,r_1 ]$&[47,55]&[44,52]&[43,51]&[39,47]&[11,18]&[7,14]&[7,13]&[3,10]\\ 
$[t_2,r_2 ]$&[51,58]&[48,54]&[47,54]&[43,50]&[14,21]&[10,17]&[10,16]&[6,13]\\
$[t_3,r_3 ]$&[54,61]&[51,57]&[50,57]&[46,53]&[17,25]&[13,21]&[12,20]&[9,17]\\
\\
$\mu_1$&62&63&65&66&67&70&71&75\\
$[t_1,r_1]$&[3,9]&[3,8]&[2,6]&[2,5]&[2,4]&[1,3]&[1,2]&[0,1]\\
$[t_2 ,r_2]$&[6,12]&[6,11]&[5,10]&[4,7]&[4,7]&[3,6]&[2,5]&[1,2]\\
$[t_3,r_3]$&[8,16]&[8,15]&[7,14]&[6,13]&[5,11]&[4,10]&[3,9]&[2,6]
\end{tabular}
\caption{Lower bounds on $t_m$ and upper bounds on $r_m$ for the schemes in Example~\ref{exslutprut}.}
\label{tabtab}
\end{table}
For the remaining values of $\mu_1$, that is for  
\begin{eqnarray}
\mu_1&\in&\{5, 8, 9, 10, 12, \ldots , 63, 65, 66, 67, 70, 71, 75\}
\backslash (S_1 \cup S_2) \nonumber \\
&=&\{17, 18, 21, 22, 23, 25, 26, 27, \ldots , 51, 52, 54, 55, 56, 59,
60\}\nonumber 
\end{eqnarray}
we have $\mu_2=\mu_1-3$,
and the best bounds (sometimes tight) are obtained
  from~(\ref{eq3}). They are:
$[t_1\geq n-\mu_1-1,r_1\leq n-\mu_1+7]$, 
$[t_2\geq n-\mu_1+3,r_2\leq n-\mu_1+10]$ and 
$[t_3\geq n-\mu_1+6,r_3\leq n-\mu_1+14]$.
\end{example}

\section*{Acknowledgments}
The authors gratefully acknowledge the support from
the Danish National Research Foundation and the National Natural Science
Foundation of China (Grant No.\ 11061130539) for the Danish-Chinese
Center for Applications of Algebraic Geometry in Coding Theory and
Cryptography. Furthermore the authors are thankful for the support
from Japan Society for the
Promotion of Science (Grant Nos.\ 23246071 and
26289116), from The Danish Council for Independent Research (Grant
No.\ DFF--4002-00367),  
from the Spanish MINECO
(Grant No.\ MTM2012-36917-C03-03), from  
National Basic Research Program of China (Grant No.\ 2013CB338004), and
National Natural Science Foundation of China (Grant No.\ 61271222). The authors would like to thank Ignacio Cascudo, Hao
Chen, Ronald Cramer and Carlos Munuera for pleasant discussions. Also
the authors would like to thank the anonymous reviewers for
valuable comments that helped us improve the paper.


\end{document}